\numberwithin{equation}{section}
\newtheorem{theorem}{Theorem}[subsection]
\newtheorem{proposition}[theorem]{Proposition}
\theoremstyle{definition}
\newtheorem{definition}[theorem]{Definition}
\newtheorem{remark}[theorem]{Remark}
\newtheorem{example}[theorem]{Example}
\newcommand*{\longhookrightarrow}{\ensuremath{\lhook\joinrel\relbar\joinrel\rightarrow}}
\DeclareMathOperator{\Vect}{Vect}
\DeclareMathOperator{\w}{w}
\DeclareMathOperator{\ad}{ad}
\font\black=cmbx10 \font\sblack=cmbx7 \font\ssblack=cmbx5 \font\blackital=cmmib10  \skewchar\blackital='177
\font\sblackital=cmmib7 \skewchar\sblackital='177 \font\ssblackital=cmmib5 \skewchar\ssblackital='177
\font\sanss=cmss10 \font\ssanss=cmss8 
\font\sssanss=cmss8 scaled 600 \font\blackboard=msbm10 \font\sblackboard=msbm7 \font\ssblackboard=msbm5
\font\caligr=eusm10 \font\scaligr=eusm7 \font\sscaligr=eusm5  \font\fraktur=eufm10
\font\sfraktur=eufm7 \font\ssfraktur=eufm5 
\font\bsymb=cmsy10 scaled\magstep2
\def\all#1{\setbox0=\hbox{\lower1.5pt\hbox{\bsymb
       \char"38}}\setbox1=\hbox{$_{#1}$} \box0\lower2pt\box1\;}
\def\exi#1{\setbox0=\hbox{\lower1.5pt\hbox{\bsymb \char"39}}
       \setbox1=\hbox{$_{#1}$} \box0\lower2pt\box1\;}
\def\tx#1{{\fam0\relax#1}}
\def\sss#1{{\fam\ssfam\relax#1}}
\def\hpb#1{\setbox0=\hbox{${#1}$}
    \copy0 \kern-\wd0 \kern.2pt \box0}
\def\vpb#1{\setbox0=\hbox{${#1}$}
    \copy0 \kern-\wd0 \raise.08pt \box0}
\def\pmb#1{\setbox0\hbox{${#1}$} \copy0 \kern-\wd0 \kern.2pt \box0}
\def\pmbb#1{\setbox0\hbox{${#1}$} \copy0 \kern-\wd0
      \kern.2pt \copy0 \kern-\wd0 \kern.2pt \box0}
\def\pmbbb#1{\setbox0\hbox{${#1}$} \copy0 \kern-\wd0
      \kern.2pt \copy0 \kern-\wd0 \kern.2pt
    \copy0 \kern-\wd0 \kern.2pt \box0}
\def\pmxb#1{\setbox0\hbox{${#1}$} \copy0 \kern-\wd0
      \kern.2pt \copy0 \kern-\wd0 \kern.2pt
      \copy0 \kern-\wd0 \kern.2pt \copy0 \kern-\wd0 \kern.2pt \box0}
\def\pmxbb#1{\setbox0\hbox{${#1}$} \copy0 \kern-\wd0 \kern.2pt
      \copy0 \kern-\wd0 \kern.2pt
      \copy0 \kern-\wd0 \kern.2pt \copy0 \kern-\wd0 \kern.2pt
      \copy0 \kern-\wd0 \kern.2pt \box0}
\mathchardef\za="710B  
\mathchardef\zb="710C  
\mathchardef\zg="710D  
\mathchardef\zd="710E  
\mathchardef\zve="710F 
\mathchardef\zz="7110  
\mathchardef\zh="7111  
\mathchardef\zvy="7112 
\mathchardef\zi="7113  
\mathchardef\zk="7114  
\mathchardef\zl="7115  
\mathchardef\zm="7116  
\mathchardef\zn="7117  
\mathchardef\zx="7118  
\mathchardef\zp="7119  
\mathchardef\zr="711A  
\mathchardef\zs="711B  
\mathchardef\zt="711C  
\mathchardef\zu="711D  
\mathchardef\zvf="711E 
\mathchardef\zq="711F  
\mathchardef\zc="7120  
\mathchardef\zw="7121  
\mathchardef\ze="7122  
\mathchardef\zy="7123  
\mathchardef\zf="7124  
\mathchardef\zvr="7125 
\mathchardef\zvs="7126 
\mathchardef\zf="7127  
\mathchardef\zG="7000  
\mathchardef\zD="7001  
\mathchardef\zY="7002  
\mathchardef\zL="7003  
\mathchardef\zX="7004  
\mathchardef\zP="7005  
\mathchardef\zS="7006  
\mathchardef\zU="7007  
\mathchardef\zF="7008  
\mathchardef\zW="700A  
\mathchardef\zC="7009  
\newcommand{\be}{\begin{equation}}
\newcommand{\ee}{\end{equation}}
\newcommand{\ra}{\rightarrow}
\newcommand{\bea}{\begin{eqnarray}}
\newcommand{\eea}{\end{eqnarray}}
\def\*{{\textstyle *}}
\newcommand{\R}{{\mathbb R}}
\newcommand{\so}{\textstyle{so}(3,\R)}
\newcommand{\D}{{\rm d}}
\newcommand{\we}{\wedge}
\newcommand{\nn}{\nonumber}
\newcommand{\s}{{\textstyle *}}
\newcommand{\pa}{\partial}
\newcommand{\ti}{\times}
\newcommand{\A}{{\cal A}}
\def\ran{\rangle}
\def\Aff{\sss{Aff}}
\def\cD{{\mathcal{D}}}
\def\cR{{\mathcal{R}}}
\def\bA{\mathbf{A}}
\def\wh{\widehat}
\def\wt{\widetilde}
\def\Sec{\sss{Sec}}
\def\AV{\sss{AV}}
\def\Z{\mathbf{Z}}
\def\la{\langle}
\def\ran{\rangle}
\def\Mi{\textnormal{Mi}}
\def\sP{{\sss P}}
\def\sZ{{\sss Z}}
\def\sT{{\sss T}}
\def\sV{{\sss V}}
\def\st{{\sss t}}
\def\sv{{\sss v}}
\def\xi{\tx{i}}
\def\cM{\cal M}
\def\cD{\cal D}
\def\xd{\operatorname{d}}
\def\s*{{\scriptstyle *}}
\def\cP{\mathcal{P}}
\def\cM{\mathcal{M}}
\def\uxd{{\underline{\mathrm{d}}}}
\newcommand{\beas}{\begin{eqnarray*}}
\newcommand{\eeas}{\end{eqnarray*}}
\def\g{\mathfrak{g}}
\def\half{\frac{1}{2}}
\begin{document}
\bibliographystyle{plain}

\author{\\
        Andrew James Bruce$^1$\\ Katarzyna  Grabowska$^2$\\ Janusz Grabowski$^1$\\
        \\
         $^1$ {\it Institute of Mathematics}\\
                {\it Polish Academy of Sciences} \\
         $^2$ {\it Faculty of Physics}\\
                {\it University of Warsaw}
                }

\date{\today}
\title{Higher order   mechanics on graded bundles
\thanks{Research of K.~Grabowska and J.~Grabowski founded by the  Polish National Science Centre grant under the contract number DEC-2012/06/A/ST1/00256. A.J.~Bruce graciously acknowledges the financial support of the Warsaw Centre for Mathematics and Computer Science in the form of a postdoctoral fellowship. }}
\maketitle

\begin{abstract}

In this paper we develop a geometric approach to higher order mechanics on graded bundles in both, the Lagrangian and Hamiltonian formalism, via  the recently discovered weighted algebroids. We present the corresponding Tulczyjew triple for this higher order situation and derive in this framework the phase equations from an arbitrary (also singular) Lagrangian or Hamiltonian, as well as the Euler--Lagrange equations.  As important examples, we  geometrically derive the classical higher order Euler--Lagrange equations and analogous reduced equations for invariant higher order Lagrangians on Lie groupoids.
\end{abstract}

\begin{small}
\noindent \textbf{MSC (2010)}:  53C10; 53D17; 53Z05; 55R10; 70H03; 70H50.\smallskip

\noindent \textbf{Keywords}: graded bundles, weighted algebroids, geometric mechanics, higher order mechanics.
\end{small}

\tableofcontents

\section{Introduction}\label{sec:Intro}

It is certainly true that higher derivative theories have received less mathematical attention than first order theories and for some sound reasons. Recall the famous Ostrogradski Theorem;\medskip

\noindent \emph{If a higher order time derivative Lagrangian is non-degenerate,  then there is at least one linear instability in the Hamiltonian of this system.}
\medskip

In this context non-degeneracy means that the highest derivative term can be expressed in terms of canonical variables. The Ostrogradski instability leads to the fact that the associated Hamiltonian is not bounded from below. This by itself is not a problem as classically one can only measure energy differences. The difficulty lies in including quantum effects of continuum theories with interactions. In such quantum field theories an ``empty'' state would spontaneously decay into a collection of positive and negative energy particles, in accordance with energy conservation. Moreover, there are typically states with non-positive norm known as ghosts that require exorcising from the physical theory. Other problems include the presence of extra degrees of freedom and the well-posedness question of an initial-value formulation  of the equations of motion. Such issues are of course not a really problem if we consider the theory to be an \emph{effective theory}; at some scale ``new physics'' enters the picture and this not dependent on higher order derivatives. Or in other words the effective theory is a kind of truncation of a complete theory that is first order.  Indeed, the main source of higher order Lagrangians in particle physics and cosmology is via effective theories. In fact for effective theories one can use the equations of motion to reduce the order and reduce the theory to first order. For mechanical systems, higher order theories similarly arise as phenomenological models via various assumptions made in the modelling.  In the context of classical mechanics,  higher derivative theories will exhibit pathological derivations from Newton's laws, but as we do not have external interactions here with continuous degrees of freedom they are not fundamentally sick as higher derivative quantum field theories. In short, higher derivative theories are important in physics even if they cannot be considered as fundamental theories. For a very nice review of the Ostrogradski instability the reader can consult \cite{Woodard:2007}.\smallskip

One important technical distinction here is that higher derivative theories with \emph{degenerate Lagrangians} are  healthy in the sense that they do not suffer with the Ostrogradski instability. Degeneracy is an interesting feature of physical models as all theories that possess continuous local symmetries are degenerate, independently of the order. A closely related subject is the reduction of Lagrangian systems invariant under the action of some group(oid).  \smallskip

 The challenge of describing mechanical systems on Lie groupoids and their reduction to Lie algebroids   was first posted by Weinstein \cite{Weinstein:1996}. Many authors took up this challenge, for example see \cite{Carinena:2001,Cortes:2006,Cortes:2009,Grabowska:2006,Grabowska:2008,Martinez:2001}. Extending the geometric tools of the Lagrangian formalism on tangent bundles to Lie algebroids was motivated by the fact that reductions usually push one out of the environment of tangent bundles. In a similar way, reductions of higher order tangent bundles, which is where higher order mechanical Lagrangians ``live'',  will push one into the environment of ``higher Lie  algebroids".\smallskip

Thus, general geometric methods that can handle both non-degenerate and degenerate  higher order Lagrangians  are very desirable in mathematical physics. In this paper we show how one can use weighted  skew/Lie algebroids as described in \cite{Bruce:2014} to achieve this goal in the context of the Lagrangian formulation of classical mechanics.  Furthermore, we complete the Tulczyjew triple and present the complimentary Hamiltonian formalism. The mechanical systems we will be dealing with will be rather general, we will make no assumption about the degeneracy for example, and the generalised higher order velocities will take their values in \emph{graded bundles}. Graded bundles are a natural higher order generalisation of the notion of a vector bundle, and were first defined and  studied by Grabowski \& Rotkiewicz in \cite{Grabowski2012}. A canonical example of a graded bundle is the k-th order tangent bundle and so our constructions can handle standard higher order Lagrangians. A geometric understanding of Lagrangian mechanics on higher tangent bundles  close to our way of thinking was developed by de Leon \& Lacomba \cite{Leon:1989} and independently by Crampin \cite{Crampin:1990}.\smallskip

A little more exotically, we show that higher order Lagrangian mechanics on Lie algebroids  can naturally be accommodated within our framework. The study of such systems  is motivated by the study of higher order systems on Lie groupoids with Lagrangians invariant under the groupoid multiplication. However, Lagrangian systems on Lie algebroids need not arise as the reduction of Lagrangian systems on Lie groupoids, and indeed not all Lie algebroids integrate to a Lie groupoid.   Interestingly, higher derivative mechanics on Lie groupoids has received little attention in the literature. In fact we are only aware of  two works in this direction; \cite{Colombo:2013,Jozwikowski:2014} who both derive the second order Euler--Lagrange equations on a Lie algebroid. The Lie group case has similarly not received a lot of attention;  the so called higher Euler--Poincar\'{e} equations were derived in \cite{Gay-Balmaz:2012}.\smallskip

We do not consider the development of higher order mechanics on Lie groupoids and algebroids as a purely academic exercise. Recall that many interesting mechanical systems can be understood as the Euler--Poincar\'{e} equations on a Lie algebra.  Similarly, a Lagrangian  on a principal $G$-bundle that is invariant under the action of $G$ leads to Lagrangian system on the associated Atyiah bundle, which is canonically a Lie algebroid.  In this paper we will provide a rather general setup that allows higher order versions of the Euler--Poincar\'{e} and Lagrange--Poincar\'{e}  equations to appear geometrically.  We certainly envisage applications of this geometric  setup, say via optimal control theory (with symmetries), in  the fields of engineering and the physical sciences.\smallskip

Our approach to higher order mechanics  using weighted algebroids makes use of first order mechanics on Lie algebroids subject to affine (vakonomic) constraints. The higher order flavour arises as underlying a weighted algebroid is a graded bundle.  This mimics the  approach to higher order mechanics on $\sT^{k}M$ by studying  first order mechanics on $\sT(\sT^{k-1}M)$ and then using the natural embedding as a constraint.\smallskip

As our approach is geometric we will understand the phase dynamics as \emph{implicit dynamics}; this is quite standard  for vakonomic constraints. That is  the phase dynamics is a subset of the tangent bundle of the phase space of the system in question. Quite  often people are satisfied with  just the Euler--Lagrange equations rather than the full phase dynamics.  However,  the Euler--Lagrange equations, being only a `shadow' of phase dynamics (Lagrange equations), do not carry the important information on how momenta are associated with velocities. Also, their solutions can come by gluing different phase solutions, therefore, apart from `good' cases, they are physically unsatisfactory.
\smallskip

We remark that various  higher order versions of Lie algebroids besides weighted algebroids have  appeared in the literature. First we must mention the (prototype) higher Lie algebroids as defined by J\'o\'zwikowski \& Rotkiewicz \cite{Jozwikowski:2014}, which are a direct generalisation of the \emph{kappa-relation} $\kappa : \sT E \rRelation \sT E$ for Lie algebroids. Although this approach is motivated by geometric mechanics, it seems not to be quite suitable for the approach pursued in this paper.\smallskip

Secondly, there is the more established notion of a higher or nonlinear Lie algebroid as defined by Voronov \cite{Voronov:2001qf,voronov-2010} in terms of a weight one homological vector field on a non-negatively graded supermanifold. As it stands, it is not clear that this notion is related to the reduction of higher tangent bundles on Lie groupoids nor how it can be applied in geometric mechanics. As we can view weighted algebroids as a special class of Voronov's nonlinear algebroids (c.f. \cite{Bruce:2014}) the link with higher order tangent bundles is clear. Thus in part, this paper establishes a link between geometric mechanics and nonlinear algebroids via weighted algebroids. However, we will not make any use of supermanifolds in the description of weighted algebroids.\smallskip

It seems that  higher order mechanical systems on manifolds with Lie algebroid-like structures are a potentially rich subject awaiting to be harvested.

\medskip

\noindent \textbf{Arrangement of paper:} In section \ref{sec:GradedBundles} we briefly recall the necessary parts of theory of graded bundles, including the notion of the linearisation functor and the linear dual, as needed in this work. We also present the bare minimum of the theory of weighted algebroids as tailored to the needs of the current paper. For a more complete account the reader is urged to consult \cite{Bruce:2014}. In essence this first section if a review of the graded geometry needed in this paper. In section \ref{sec:PhaseSpacesMechanicsConstraints} we review some nonstandard aspects of geometric mechanics including affine phase spaces, mechanics on Lie algebroids and vakonomic constraints. Our review here is focused on the elegant approach to geometric mechanics as pioneered by Tulczyjew. With this in mind we end that section we the Tulczyjew triple for standard higher order mechanics. The new elements of this paper are to be found in section \ref{sec:MechanicsGradedBundles}, where we present the Lagrangian and Hamiltonian formalism on graded bundles via weighed algebroids, as well as the corresponding  Tulczyjew triple.  We then look at an application of our formalism to higher order mechanics on a Lie algebroid in section \ref{sec:Higher Lagrangian Algebroid}. In particular we explicitly construct the higher order Euler--Lagrange equations for a higher order Lagrangian system on a Lie algebroid. As particular examples, we geometrically derive the second order Euler--Poincar\'{e} and Lagrange--Poincar\'{e}  equations.


\section{Graded bundles and weighted algebroids}\label{sec:GradedBundles}
In this section we briefly recall parts of the theory of graded bundles, n-tuple graded bundles, the linearisation functor and weighted algebroids as needed in later sections of this paper. The interested reader should consult the original literature \cite{Bruce:2014,Grabowski:2006,Grabowski2012} for details, such as proofs of the statements made in this section.

\subsection{Graded bundles}\label{subsec:GradedBundles}
\noindent \textbf{Graded an n-tuple graded bundles:} Manifolds and supermanifolds that carry various extra gradings on their structure sheaf are now an established part of modern geometry and mathematical physics. The general theory of graded manifolds in our understanding was initiated by Voronov  in $\cite{Voronov:2001qf}$. The graded structure on such (super)manifolds is conveniently encoded in a weight vector field whose action via the Lie derivative counts the degree of tensor and tensor-like objects on the (super)manifold. We will restrict our attention to just genuine manifolds in this paper and will not deal with supermanifolds at all. The reason for this is rooted in our applications of graded manifolds to mechanics rather than any fundamental geometric reasons.\smallskip

An important class of such manifolds are those that carry non-negative grading. We will furthermore require that this grading is associated with a smooth action $h:\R\ti F\to F$ of the monoid $(\R,\cdot)$ of multiplicative reals on a manifold $F$, a \emph{homogeneity structure} in the terminology of \cite{Grabowski2012}.
This action reduced to $\R_{>0}$ is the one-parameter group of diffeomorphism integrating the weight vector field, thus the weight vector field is in this case \emph{h-complete} \cite{Grabowski2013} and only \emph{non-negative integer weights} are allowed, so the algebra $\mathcal{A}(F)\subset C^\infty(F)$ spanned by homogeneous function is $\mathcal{A}(F) = \bigoplus_{i \in \mathbb{N}}\mathcal{A}^{i}(F)$.  This algebra  is referred to as the \emph{algebra of polynomial functions} on $F$. Importantly, we have  that for $t \neq 0$ the action $h_{t}$ is a diffeomorphism of $F$ and when $t=0$ it is a smooth surjection $\tau=h_0$ onto $F_{0}=M$, with the fibres being diffeomorphic to $\mathbb{R}^{N}$ (c.f.  \cite{Grabowski2012}).  Thus, the objects obtained are particular kinds of \emph{polynomial bundles} $\tau:F\to M$, i.e. fibrations which locally look like $U\times\R^N$ and the change of coordinates (for a certain choice of an atlas) are polynomial in $\R^N$. For this reason graded manifolds with non-negative weights and h-complete weight vector fields are also known as \emph{graded bundles} \cite{Grabowski2012}. Furthermore, the h-completeness condition  implies that  graded bundles are determined by the algebra of homogeneous functions on them.\smallskip

Canonical examples of graded bundles are, for instance, vector bundles, $n$-tuple vector bundles, higher tangent bundles $\sT^kM$, and multivector bundles $\we^n\sT E$ of vector bundles $\tau:E\to M$ with respect to the projection $\we^n\sT\tau:\we^n\sT E\to \we^n\sT M$ (see \cite{Grabowska2014}).  If the weight is constrained to be either zero or one, then the weight vector field is precisely a  vector bundle structure on $F$ and will be generally referred  to as an \emph{Euler vector field}.
\smallskip

We will regularly employ local coordinate systems adapted to the graded structure on a graded bundle. One can always pick an affine atlas of $F$ consisting of charts for which we  have homogeneous local coordinates $(x^{A}, y_{w}^{a})$, where $\w(x^{A}) =0$ and  $\w(y_{w}^{a}) = w$ with $1\leq w \leq k$, for some $k \in \mathbb{N}$ known as the \emph{degree}  of the graded bundle. It will be convenient to group all the coordinates with non-zero weight together.  The index  $a$ should be considered as a ``generalised index" running over all the possible weights. The label $w$ in this respect largely redundant, but it will come in very useful when checking the validity of various expressions.  The local changes of coordinates  are of the form

\begin{eqnarray}\label{eqn:translaws}
x^{A'} &=& x^{A'}(x),\\
\nonumber y^{a'}_{w} &=& y^{b}_{w} T_{b}^{\:\: a'}(x) + \sum_{\stackrel{1<n  }{w_{1} + \cdots + w_{n} = w}} \frac{1}{n!}y^{b_{1}}_{w_{1}} \cdots y^{b_{n}}_{w_{n}}T_{b_{n} \cdots b_{1}}^{\:\:\: \:\:\:\:\:a'}(x),
\end{eqnarray}
where $T_{b}^{\:\: a'}$ are invertible and the $T_{b_{n} \cdots b_{1}}^{\:\:\: \:\:\:\:\:a'}$ are symmetric in lower indices.

\smallskip
A graded bundle  of degree $k$  admits a sequence of  polynomial fibrations, where a point of $F_{l}$ is a class of the points of $F$ described  in an affine coordinate system by the coordinates of weight $\leq l$, with the obvious tower of  surjections

\begin{equation}\label{eqn:fibrations}
F=F_{k} \stackrel{\tau^{k}}{\longrightarrow} F_{k-1} \stackrel{\tau^{k-1}}{\longrightarrow}   \cdots \stackrel{\tau^{3}}{\longrightarrow} F_{2} \stackrel{\tau^{2}}{\longrightarrow}F_{1} \stackrel{\tau^{1}}{\longrightarrow} F_{0} = M,
\end{equation}

\noindent where the coordinates on $M$ have zero weight. Note that  $F_{1} \rightarrow M$ is a linear fibration and the other fibrations $F_{l} \rightarrow F_{l-1}$ are affine fibrations in the sense that the changes of local coordinates for the fibres are linear plus and additional additive terms of appropriate weight (c.f. \ref{eqn:translaws}).  The model fibres here are $\mathbb{R}^{N}$ (c.f.  \cite{Grabowski2012}).  We will also on occasion employ the projections $\tau^{i}_{j} : F_{i} \rightarrow F_{j}$, which are defined as the appropriate composition of the above projections. We will also use on occasion $\tau := \tau^{k}_{0} : F_{k} \rightarrow M$.\smallskip

There is also a ``dual"  sequence of submanifolds and their inclusions

\begin{equation}\label{eqn:submanifolds}
M := F_{0}= F^{[k]}  \hookrightarrow F^{[k-1]} \hookrightarrow \cdots \hookrightarrow F^{[0]} = F_{k},
\end{equation}

\noindent where we define, locally but correctly,

\begin{equation*}
F^{[i]} := \left\{\left. p \in F_{k} \right| y_{w}^{a} = 0 \:\:\: \textnormal{if} \:\: w \leq i   \right\}.
\end{equation*}

\noindent If we define $\mathcal{A}_{l}(F)$ to be the subalgebra of $\mathcal{A}(F)$ locally generated by functions of weight $\leq l$, then corresponding to the sequence of submanifolds and their inclusions is the filtration of algebras $C^{\infty}(M) = \mathcal{A}_{0}(F) \subset \mathcal{A}_{1}(F) \subset \cdots \subset \mathcal{A}_{k}(F) = \mathcal{A}(F)$.\smallskip

\noindent It will also be useful to consider the submanifold $\bar{F}_{l}:= F_{l}^{[l-1]}$ which is in fact linearly fibred   over $M$, with the linear coordinates carrying weight $l$.  The module of sections of $\bar{F}_{l}$ is identified with the $C^{\infty}(M)$-module $\mathcal{A}^{l}(F)$.

\smallskip
Morphisms between graded bundles are necessarily polynomial in the non-zero weight coordinates and respect the weight. Such morphisms can be characterised by the fact that they intertwine the respective weight vector fields  \cite{Grabowski2012}.\smallskip

The notion of a double vector bundle \cite{Pradines:1974} (or a higher n-tuple vector bundle)   is conceptually  clear in the graded language in terms of mutually commuting weight vector fields; see  \cite{Grabowski:2006,Grabowski2012}. This leads to the higher analogues known as \emph{n-tuple graded bundles}, which are  manifolds for which  the structure sheaf carries an $\mathbb{N}^{n}$-grading such that all the weight vector fields are h-complete and pairwise commuting.  In particular a \emph{double graded bundle} consists of a manifold and a pair of mutually commuting weight vector fields. If all the weights are either zero or one then we speak of an \emph{n-tuple vector bundle}.\smallskip

Let $\mathcal{M}$ be an n-tuple graded manifold and $(\Delta^{1}_{\mathcal{M}} , \cdots \Delta^{n}_{\mathcal{M}})$ be the collection of pairwise commuting weight vector fields. The local triviality of n-tuple graded bundles  means that we can always equip an n-tuple graded bundle with  an atlas such that the charts consist of coordinates that are simultaneously homogeneous with respect to the  weights associated with each weight vector field (c.f. \cite{Grabowski2012}). Thus, we can always equip an n-tuple graded bundle with homogeneous local coordinates $(y)$ such that $\underline{\w}(y) = (\w_{1}(y) , \cdots \w_{n}(y) ) \in \mathbb{N}^{n}$. The changes of local coordinates must respect the weights. Similarly, morphisms between n-tuple graded bundles respect the weights of the local coordinates. \smallskip

Each $\Delta^{s}_{\mathcal{M}}$  $(1 \leq s \leq n)$ defines a submanifold $\mathcal{M}_{s} \subset \mathcal{M}$ for which $\Delta^{s}_{\mathcal{M}}=0$. Moreover, as the weight vector fields are h-complete, we have a bundle structure, that is a surjective submersion $h_{0}^{s}:\mathcal{M} \rightarrow \mathcal{M}_{s}$ defined by the homogeneity structures associated with  each $\Delta^{s}_{\mathcal{M}}$.  As the homogeneity structures (or equivalently the weight vector fields) all pairwise commute, we have a whole diagram of fibrations  $h_{0}^{i_{1}}\circ h_{0}^{i_{2}}\circ \cdots \circ h_{0}^{i_{n}}: \mathcal{M} \rightarrow \mathcal{M}_{i_{1}}\cap \mathcal{M}_{i_{2}} \cap \cdots \cap \mathcal{M}_{i_{n}}$, where the fibres are homogeneity spaces.
\smallskip

 To set some useful notation, if $\Delta_{\mathcal{M}}$ is a weight vector field on $\mathcal{M}$, then we denote by $\mathcal{M}[\Delta_{\mathcal{M}} \leq l]$ the base manifold of the locally trivial fibration defined by taking the weight $>l$ coordinates with respect to this complimentary  weight to be the fibre coordinates. We have a natural projection that we will denote as

\begin{equation*}
\textnormal{p}^{\mathcal{M}}_{{\mathcal{M}}[\Delta_{\mathcal{M}} \leq l]}: \mathcal{M} \rightarrow \mathcal{M}[\Delta_{\mathcal{M}} \leq l].
\end{equation*}

Consider $\underline{l}$ and $\underline{m} \in \mathbb{N}^{n}$ we can define a partial ordering as $\underline{l} \preceq \underline{m} \Leftrightarrow \forall j \:\: l_{j} \leq m_{j}$. Then  given $\mathcal{M}$ above we  define

\begin{equation}
\mathcal{M}^{[\: \underline{m}\:  ]} :=  \left\{ \left. p \in \mathcal{M} \right| y = 0 \:\: \textnormal{if} \:\: \underline{\w}(y) \neq \underline{0} \: \: \textnormal{and} \:\: \underline{\w}(y) \preceq \underline{m} \right\}.
\end{equation}

As the changes of local coordinates must respect each of the weights independently and the fact that all the weights are non-negative  it is clear that  $\mathcal{M}^{[\underline{m} ]}$  are well-defined n-tuple graded subbundles of $\mathcal{M}$.
\smallskip

One can pass from an n-tuple graded bundle to an (n-r)-tuple graded bundle via the process of taking linear combinations of weight vector fields.  In particular one can construct the \emph{total weight vector field} as the sum of all the weight vector fields on an n-tuple graded bundle. However, note that the passage from an n-tuple graded bundle to a graded bundle via collapsing the weights is far from unique.

\begin{definition} A double graded bundle
$(\mathcal{M}_{k}, \Delta^{1}, \Delta^{2})$ such that
$\zD^1$ is of degree $k-1$ and $\zD^2$ is of degree 1, i.e. $\Delta^{2}$ is an Euler vector field, will be referred to as a  \emph{graded--linear bundle} of degree $k$, or for short a $\mathcal{GL}$-bundle.
\end{definition}
\noindent
It follows that $\mathcal{M}_{k}$ is of total weight $k$ with respect to $\Delta := \Delta^{1} + \Delta^{2}$ and a vector bundle structure
\begin{equation*}
\textnormal{p}^{\mathcal{M}_{k}}_{B_{k-1}}: \mathcal{M}_{k} \rightarrow B_{k-1}.
\end{equation*}
with respect to the projection onto the submanifold $B_{k-1} := \mathcal{M}_{k}[\Delta^{2} \leq 0]$ which inherits a graded bundle structure of degree $k-1$.

\medskip

\noindent \textbf{The tangent and phase lift:} Consider an arbitrary graded bundle $F_{k}$ of degree $k$. Let us employ  homogeneous coordinates $(x^{A}, y_{w}^{a})$ with $1 \leq w \leq k$. The weight vector field is locally given by
\begin{equation*}
\Delta_{F} := \sum_{w} w y_{w}^{a}\frac{\partial}{\partial y_{w}^{a}}.
\end{equation*}
The tangent bundle $\sT F_{k}$ naturally has the structure of a double graded bundle. To see this we employ homogeneous coordinates  with respect to the bi-weight which consists of the natural lifted weight and the weight due the vector bundle structure of the tangent bundle \cite{Grabowski:1995};

\begin{equation*}
 (\underbrace{x^{A}}_{(0,0)}, \hspace{5pt} \underbrace{y^{a}_{w}}_{(w,0)},\hspace{5pt} \underbrace{\dot{x}_{1}^{B}}_{(0,1)}, \hspace{5pt}  \underbrace{\dot{y}^{b}_{w+1}}_{(w,1)}).
 \end{equation*}
 In other words, we have the first weight vector field being simply the \emph{tangent lift} \cite{Grabowski:1995} of the weight vector field in $F_{k}$ and the second being the natural Euler  vector field on a tangent bundle;
 \begin{eqnarray}
 \Delta^{1}_{\sT F} &=& \D_{\sT}\Delta_{F} = \sum_{w} w y_{w}^{a}\frac{\partial}{\partial y_{w}^{a}} + \sum_{w} w \dot{y}_{w+1}^{a}\frac{\partial}{\partial \dot{y}_{w+1}^{a}},\\
 \nonumber \Delta^{2}_{\sT F} &=& \dot{x}_{1}^{A}\frac{\partial}{\partial \dot{x}_{1}^{A}} + \sum_{w} \dot{y}_{w+1}^{a}\frac{\partial}{\partial \dot{y}_{w+1}^{a}}.
 \end{eqnarray}
\medskip

The tangent bundle $\sT F_{k}$ of a graded bundle of degree $k$ can also be naturally considered as a graded bundle of degree $k+1$ via the total weight which is described by the total weight vector field
\begin{equation*}
\Delta_{\sT F_{k}} := \sum_{w} w y_{w}^{a}\frac{\partial}{\partial y_{w}^{a}} + \dot{x}_{1}^{A}\frac{\partial}{\partial \dot{x}_{1}^{A}} + \sum_{w} (w+1)\dot{y}^{a}_{w+1} \frac{\partial}{\partial \dot{y}_{w+1}^{a}}.
\end{equation*}

For example, it is easy to see that with respect to the standard bi-weight on $\sT F_{k}$ we have $\sT F_{k}^{[(0,1)]} = VF_{k}$, that is the vertical bundle with respect  to the projection $\tau: F_{k} \rightarrow M$. The weight vector fields on the vertical bundle are simply the appropriate restrictions of those on the tangent bundle.  Via passing to the total weight we see that $VF_{k}$ can be considered as a graded bundle of degree $k+1$. However, it will be prudent to shift the first component of bi-weight to allow us to consider the vertical bundle as a graded bundle of degree $k$.  In terms of the weight vector fields this is simply the redefinition $\Delta_{VF}^{1} \mapsto \Delta_{VF}^{1}- \Delta_{VF}^{2}$ and $\Delta_{VF}^{2} \mapsto \Delta_{VF}^{2}$ and then considering the redefined total weight.  We will use this assignment of weights for all vertical bundles in employed in this paper.
\smallskip

Similarly to the case of the tangent bundle, the cotangent bundle $\sT^{*}F_{k}$ of a graded bundle also comes naturally with the structure of a double graded bundle. However,  simply using the naturally induced weight would mean that the ``momenta" will have a negative component of their bi-degree and so the associated weight vector field cannot be complete; we would not remain in the category of graded bundles. Instead one needs to consider a phase lift of the weight vector field on $F_{k}$  (c.f. \cite{Grabowski2013}).  The \emph{k-phase lift} of the weight vector field $\Delta_{F}$ essentially  produces a shift in the induced bi-weight to ensure that everything is non-negative: it amounts  to the following choice of homogeneous coordinates
 \begin{equation*}
 (\underbrace{x^{A}}_{(0,0)}, \hspace{5pt} \underbrace{y^{a}_{w}}_{(w,0)},\hspace{5pt} \underbrace{\pi^{k+1}_{B}}_{(k,1)}, \hspace{5pt}  \underbrace{\pi_{b}^{k-w+1}}_{(k-w,1)}),
 \end{equation*}
 \noindent and the weight vector fields are given by
 \begin{eqnarray}
 \Delta^{1}_{\sT^{*}F} &=&  \sum_{w} w y_{w}^{a}\frac{\partial}{\partial y_{w}^{a}} + k \pi_{A}^{k+1} \frac{\partial }{\partial \pi_{A}^{k+1}} + \sum_{w} (k-w) \pi_{a}^{k-w+1}\frac{\partial}{\partial \pi_{a}^{k-w+1}},\\
 \nonumber \Delta^{2}_{\sT^{*}F} &=& \pi_{A}^{k+1}\frac{\partial}{\partial \pi^{k+1}_{A}} + \sum_{w} \pi^{k-w+1}_{a}\frac{\partial}{\partial \pi^{k-w+1}_{a}}.
 \end{eqnarray}
\medskip

Throughout this paper we will equip the cotangent bundles of graded bundle with  the k-phase lifted weight vector field, to ensure do not leave the category of n-tuple graded bundles.
\smallskip

The cotangent bundle $\sT^{*} F_{k}$ of a graded bundle of degree $k$ can be considered as a graded bundle of degree $k+1$ via the total weight which is described by the total weight vector field
\begin{equation*}
 \Delta_{\sT^{*}F_{k}} := \sum_{w} w y_{w}^{a}\frac{\partial}{\partial y_{w}^{a}} + (k+1) \pi_{A}^{k+1} \frac{\partial }{\partial \pi_{A}^{k+1}} + \sum_{w} (k-w+1) \pi_{a}^{k-w+1}\frac{\partial}{\partial \pi_{a}^{k-w+1}}.
 \end{equation*}

\subsection{The linearisation functor and linear duals}\label{sec:linearisation functor}

 An important concept as first uncovered in \cite{Bruce:2014} is the notion of the linearisation functor which takes a graded bundle and produces a double grade bundle for which the two side arrows are vector bundles.  The basic idea is to mimic the canonical embedding  $\sT^{k}M \subset \sT(\sT^{k-1}M)$ via ``holonomic vectors".\smallskip

\begin{definition}
The \emph{linearisation of a graded bundle} $F_{k}$ is the $\mathcal{GL}$-bundle defined as
\begin{equation*}
D(F_{k}) := VF_{k}[\Delta_{VF}^{1} \leq k-1],
\end{equation*}

so that we have the natural projection $\textnormal{p}^{VF_{k}}_{D(F_{k})} : VF_{k} \rightarrow D(F_{k})$.
\end{definition}

Let us briefly describe the local structure of the linearisation. Consider $F_{k}$ equipped with local coordinates $(x^{A}, y_{w}^{a}, z^{i}_{k})$, where the weights are assigned as $\w(x)=0$, $\w(y_{w}) = w$ $(1\leq w < k)$ and $\w(z) =k$. From this point on it will be convenient to single out the highest weight coordinates as well as the zero weight coordinates. In any natural homogeneous system on coordinates on the vertical bundle $VF_{k}$ one projects out the highest weight coordinates on $F_{k}$ to obtain $D(F_{k})$. One can easily check in local coordinates that doing so is well-defined. Thus on $D(F_{k})$ we have local homogeneous coordinates

\begin{equation}\label{coordinates:T*Fk}
 (\underbrace{x^{A}}_{(0,0)}, \hspace{5pt} \underbrace{y^{a}_{w}}_{(w,0)}; \hspace{5pt}  \underbrace{\dot{y}^{b}_{w}}_{(w-1,1)}, \hspace{5pt} \underbrace{\dot{z}_{k}^{i}}_{(k-1,1)}).
 \end{equation}

\noindent Note that with this assignment of the weights the linearisation of a graded bundle of degree $k$ is itself a graded bundle of degree $k$ when passing from the bi-weight to the total weight. It is important to note that the linearisation has the structure of a vector bundle $D(F_{k}) \rightarrow F_{k-1}$, hence the nomenclature ``linearisation".  The vector bundle structure is clear from the construction by examining the bi-weight. In \cite{Bruce:2014} it was  shown that  there is a canonical weight preserving  embedding of the graded bundle $F_{k}$

\begin{equation*}
\iota : F_{k} \hookrightarrow D(F_{k}),
\end{equation*}

 \noindent with respect to  the total weight on $D(F_{k})$, given by the image of the weight vector field $\Delta_{F} \in \Vect(F)$ considered as a geometric section of $VF_{k}$. In short  we have the following commutative diagram

\begin{diagram}[htriangleheight=20pt ]
 F_{k}& \rTo^{\Delta_{F}} & VF_{k} \\
 & \rdTo_{\iota}  & \dTo_{\textnormal{p}^{VF_{k}}_{D(F_{k})}}\\
  & & D(F_{k})
\end{diagram}
\smallskip

\noindent In natural local coordinates the nontrivial part of the embedding  is given by
\begin{equation*}
\iota^{*}(\dot{y}^{a}_{w}) = w \: y^{a}_{w}, \hspace{25pt} \iota^{*}(\dot{z}_{k}^{i}) = k\: z_{k}^{i}.
\end{equation*}

\noindent Elements of $\iota(F_{k})$ we refer to as \emph{holonomic vectors} in $D(F_{k})$.

\begin{theorem}\label{theom:functorial linearisation}
Linearisation is a functor from the category of graded bundles to the category of double graded bundles. In particular, the tower of fibrations (\ref{eqn:fibrations}) induces fibrations

\begin{equation}\label{eqn:linearized fibrations}
D(F_{k}) \stackrel{D(\tau^{k})}{\longrightarrow} D(F_{k-1})
\stackrel{D(\tau^{k-1})}{\longrightarrow}   \cdots \stackrel{D(\tau^{3})}{\longrightarrow} D(F_{2}) \stackrel{D(\tau^{2})}{\longrightarrow}D(F_{1})=F_1 \stackrel{\zt^1}{\longrightarrow} D(F_{0}) = M.
\end{equation}
\end{theorem}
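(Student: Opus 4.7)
The plan is to define the functor $D$ on morphisms, verify the functor axioms, and then apply $D$ to each arrow of the tower (\ref{eqn:fibrations}).

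For the action on morphisms, let $\phi : F_{k} \to G_{k}$ be a morphism of graded bundles, so that $\phi$ intertwines the weight vector fields $\Delta_{F}$ and $\Delta_{G}$. The tangent lift $\sT\phi : \sT F_{k} \to \sT G_{k}$ carries vertical vectors (for the projections onto the respective zero-weight bases) to vertical vectors, since $\phi$ covers a smooth map on the bases. Hence we obtain $V\phi : VF_{k} \to VG_{k}$. By functoriality of the tangent lift and the hypothesis $\sT\phi \circ \Delta_{F} = \Delta_{G} \circ \phi$, the map $V\phi$ intertwines the first weight vector fields $\Delta^{1}_{VF}$, $\Delta^{1}_{VG}$ (these being the tangent lifts of $\Delta_{F}$, $\Delta_{G}$ restricted to the vertical bundles), and it automatically respects the Euler vector fields $\Delta^{2}_{VF}$, $\Delta^{2}_{VG}$, since it is a restriction of a tangent map. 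In particular $V\phi$ carries the fibres of $\textnormal{p}^{VF_{k}}_{D(F_{k})}$ (cut out by the condition $\Delta^{1}_{VF} \leq k-1$) into the fibres of $\textnormal{p}^{VG_{k}}_{D(G_{k})}$, and hence descends to a well-defined morphism of double graded bundles $D(\phi) : D(F_{k}) \to D(G_{k})$.

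For functoriality, $D(\mathrm{id}_{F_{k}}) = \mathrm{id}_{D(F_{k})}$ and $D(\phi \circ \psi) = D(\phi) \circ D(\psi)$ follow from the corresponding identities for $V$ together with the uniqueness of the descent through $\textnormal{p}^{VF_{k}}_{D(F_{k})}$. Applying this functor to the tower (\ref{eqn:fibrations}) then yields the sequence (\ref{eqn:linearized fibrations}), since each $\tau^{i}$ is a genuine morphism of graded bundles. To see that each $D(\tau^{i})$ is a fibration, work in adapted coordinates: $\tau^{i}$ is a coordinate projection killing the weight-$i$ coordinates, and, reading off from (\ref{coordinates:T*Fk}), $D(\tau^{i})$ is correspondingly the projection killing the weight-$(i,0)$ coordinates and the weight-$(i-1,1)$ coordinates $\dot{y}^{b}_{i}$, which is clearly a surjective submersion. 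The endpoint identifications $D(F_{0}) = M$ and $D(F_{1}) = F_{1}$ follow by direct inspection: for $F_{0} = M$ viewed as a degree-$0$ graded bundle the vertical bundle is trivial, while for a vector bundle $F_{1} \to M$ the projection $\Delta^{1}_{VF_{1}} \leq 0$ leaves exactly the coordinates $(x^{A}, \dot{z}^{i}_{1})$ of bi-weight $(0,0)$ and $(0,1)$, reproducing $F_{1}$ as a $\mathcal{GL}$-bundle.

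The only real obstacle is the well-definedness of $D$ on morphisms, i.e.\ checking that $V\phi$ actually descends through $\textnormal{p}^{VF_{k}}_{D(F_{k})}$. This reduces to the intertwining of $\Delta^{1}_{VF}$ and $\Delta^{1}_{VG}$ by $V\phi$, which is in turn a direct consequence of $\phi$ intertwining $\Delta_{F}$ and $\Delta_{G}$ and the naturality of the tangent lift of vector fields. Once this is in hand, everything else is a formal consequence of the functoriality of $\sT$ together with a routine local-coordinate check for the fibration property.
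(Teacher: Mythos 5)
The paper itself contains no proof of this theorem: it is one of the statements imported from \cite{Bruce:2014} (see the disclaimer at the start of Section 2), so there is nothing internal to compare against. Your strategy --- descend $V\phi$ through the quotient projections $\textnormal{p}^{VF_{k}}_{D(F_{k})}$, get functoriality from uniqueness of descent, then read off the tower in adapted coordinates --- is the natural one and is essentially how the cited reference proceeds. However, the one step you yourself identify as ``the only real obstacle'' is not justified by the reason you give. You claim the descent ``reduces to the intertwining of $\Delta^{1}_{VF}$ and $\Delta^{1}_{VG}$ by $V\phi$''; this is false as stated. Intertwining the weight vector fields only says $V\phi$ is weight-preserving. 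To get descent you must add that the pull-back under $V\phi$ of a coordinate of first weight $\leq k-1$ is a polynomial of first weight $\leq k-1$ in coordinates that are all of \emph{non-negative} first weight, and hence cannot involve the first-weight-$k$ coordinates being quotiented out. Non-negativity is essential, and the conclusion genuinely fails for weight-raising morphisms: $\phi:\R[1]\to\R[1]\times\R[2]$, $x\mapsto(x,x^{2})$, intertwines the weight vector fields, yet $\textnormal{p}\circ V\phi$ pulls $\dot v$ back to $2x\dot x$, which does not factor through $D(\R[1])=\{\dot x\}$. So the category must be taken to be graded bundles of a fixed degree $k$ (or morphisms restricted to be degree-non-increasing, which is what covers the $\tau^{i}$). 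This is a fillable gap, but it has to be filled, since it is the entire content of well-definedness.

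Separately, your coordinate description of $D(\tau^{i})$ is wrong, although the conclusion you draw survives. The bi-weight-$(i,0)$ coordinates are already absent from $D(F_{i})$ (they were removed in forming it); what $D(\tau^{i}):D(F_{i})\to D(F_{i-1})$ actually projects out are the coordinates of bi-weight $(i-1,1)$ (the top ``velocities'' $\dot z^{j}_{i}$) \emph{together with} the coordinates $y^{a}_{i-1}$ of bi-weight $(i-1,0)$, because $D(F_{i-1})$ is a vector bundle over $F_{i-2}$, not over $F_{i-1}$; that is, $D(\tau^{i})$ covers $\tau^{i-1}$ rather than the identity on $F_{i-1}$. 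Since the map is still a projection onto a coordinate subspace it remains a surjective submersion, so the fibration claim in (\ref{eqn:linearized fibrations}) is unaffected, but the slip indicates that the two vector-bundle structures on source and target have not been traced through. The endpoint identifications $D(F_{1})=F_{1}$ and $D(F_{0})=M$ are handled correctly.
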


\begin{example} We will show that $D(\sT^{k+1}M)\simeq\sT\sT^k M$. Let us recall first the well-known fact that iterated tangent bundles $\sT^l\sT^k M$ can be defined as equivalence classes of smooth maps $\chi:\R^2\rightarrow M$. The equivalence class of $\chi$ in $\sT^l\sT^k M$ will be denoted by $\st^{(l,k)}\chi$. For example,
if $l=1$, then in the equivalence class of $\sT\sT^kM\ni v=(q,\dot q, \cdots, q^{(k)}, \delta q,\delta \dot q,\cdots, \delta q^{(k)})$ there is a representative of the form
\begin{equation}\label{l1}
\chi(s,t)=q+s\delta q+t(\dot q+s\delta\dot q)+\frac12t^2(\ddot q+s\delta \ddot q)+\cdots+\frac{1}{k!}t^k(q^{(k)}+s\delta q^{(k)})+ t^{k+1}f(s,t)
\end{equation}
where $f$ is any smooth function of two variables.

Our goal is to define a lift of a tangent vector $v\in\sT_a\sT^kM$ to the vertical vector $v^{\sv}\in\sV_b\sT^{k+1} M$ provided $a=\zt^{k+1}_{k}(b)$. Let us take a representative $\chi$ of $v$ such that the curve $t\mapsto \chi(0,t)$ is a representative of $b$. In terms of coordinates it means that, if $b=(q,\dot q, \cdots, q^{(k)},q^{(k+1)} )$, then the function $f$ is of the form $\frac1{(k+1)!}q^{(k+1)}+tg(s,t)$. We define $v^\sv$ as an equivalence class of $\tilde\chi$, where $\tilde\chi(s,t)=\chi(st, t)$,
i.e.
$$v^{\sv}=\st^{(1,k+1)}\tilde\chi.$$
The vector $v^{\sv}$ is indeed vertical, since the curve $s\mapsto \tilde\chi(s,0)$ is a constant curve.
In coordinates, using (\ref{l1}),
\begin{multline}\label{l2}
\tilde\chi(s,t)=q+st\delta q+t(\dot q+st\delta\dot q)+\frac12t^2(\ddot q+st\delta \ddot q)+\cdots+ \\
\frac{1}{k!}t^k(q^{(k)}+st\delta q^{(k)})+ \frac1{(k+1)!}t^{k+1}q^{(k+1)}+t^{k+2}g(st,t)\,.
\end{multline}
After some reordering we get
\begin{multline}\label{l3}
\tilde\chi(s,t)=q+t(\dot q+s\delta q)+\frac12t^2(\ddot q+2s\delta \dot q)+\cdots+\frac{1}{k!}t^k(q^{(k)}+ks\delta q^{(k-1)})+\\ \frac{1}{(k+1)!}t^{k+1}(q^{(k+1)}+(k+1)s\delta q^{(k)})+t^{k+2}g(st,t)\,.
\end{multline}
In coordinates, from (\ref{l3}) we see that
$$v^\sv=(q,\,\dot q,\, \cdots,\, q^{(k)},\,q^{(k+1)},\, 0,\, \delta q,\,2\delta \dot q,\,\cdots,\, (k+1)\delta q^{(k)}).$$
The above map $\sT_a\sT^kM\rightarrow\sV_b\sT^{k+1} M$, $v\mapsto v^{\sv}$ is a linear isomorphism. It provides the identification between $D(\sT^{k+1}M)$ and $\sT\sT^k M$.
\end{example}

\begin{definition}
The \emph{linear dual of a graded bundle} $F_{k}$ is the dual of the vector bundle $D(F_{k}) \rightarrow F_{k-1}$, and we will denote this $D^{*}(F_{k})$.
\end{definition}

\begin{proposition}  We have
\begin{equation*}
D^{*}(F_{k}) \simeq \sT^{*}F_{k}[\Delta^{1}_{\sT^{*}F_{k}} \leq k-1],
\end{equation*}
which gives the canonical projection $\sT^{*}F_{k} \rightarrow D^{*}(F_{k})$.
\end{proposition}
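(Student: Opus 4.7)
The plan is to realise the proposed isomorphism as the composition of two canonical surjections, $\sT^{*}F_{k}\twoheadrightarrow V^{*}F_{k}\twoheadrightarrow D^{*}(F_{k})$, whose combined effect matches the coordinate description of $\sT^{*}F_{k}[\Delta^{1}_{\sT^{*}F_{k}}\leq k-1]$. In local adapted coordinates on $F_{k}$, write $z_{k}^{i}:=y_{k}^{i}$ for the top-weight base coordinates; then the $k$-phase lifted coordinates on $\sT^{*}F_{k}$ are $(x^{A},y_{w}^{a},z_{k}^{i},\pi_{A}^{k+1},\pi_{a}^{k-w+1},\pi_{i}^{1})$, and the ones of $\Delta^{1}$-weight strictly greater than $k-1$ are precisely $z_{k}^{i}$ and the momenta $\pi_{A}^{k+1}$ conjugate to $x^{A}$. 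Thus $\sT^{*}F_{k}[\Delta^{1}_{\sT^{*}F_{k}}\leq k-1]$ has local coordinates $(x^{A},y_{w}^{a},\pi_{a}^{k-w+1},\pi_{i}^{1})$ with $1\leq w\leq k-1$, fibred as a vector bundle over $F_{k-1}$.

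Dualising the inclusion $VF_{k}\hookrightarrow\sT F_{k}$ of vector bundles over $F_{k}$ produces the short exact sequence
\[0\longrightarrow\tau^{*}\sT^{*}M\longrightarrow\sT^{*}F_{k}\longrightarrow V^{*}F_{k}\longrightarrow 0,\]
whose quotient map, in the coordinates above, simply forgets $\pi_{A}^{k+1}$. Next, the projection $VF_{k}\to D(F_{k})$ built into the linearisation construction is a smooth map covering $\tau^{k}:F_{k}\to F_{k-1}$; differentiating the transition laws (\ref{eqn:translaws}) shows that the linear fibre coordinates $(\dot{y}_{w}^{a},\dot{z}_{k}^{i})$ on $VF_{k}$ transform with coefficients depending on $x^{A}$ and on $y_{w}^{a}$ with $w<k$ only, and never on $z_{k}^{i}$. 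Hence $VF_{k}\to D(F_{k})$ is fibrewise a linear isomorphism, so $VF_{k}\cong(\tau^{k})^{*}D(F_{k})$ and, dually, $V^{*}F_{k}\cong(\tau^{k})^{*}D^{*}(F_{k})$, producing a canonical surjection $V^{*}F_{k}\twoheadrightarrow D^{*}(F_{k})$ over $\tau^{k}$ that in coordinates simply forgets $z_{k}^{i}$.

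Composing gives $\sT^{*}F_{k}\twoheadrightarrow D^{*}(F_{k})$, killing exactly $\pi_{A}^{k+1}$ and $z_{k}^{i}$; this coincides with the defining projection to $\sT^{*}F_{k}[\Delta^{1}_{\sT^{*}F_{k}}\leq k-1]$ and induces the stated isomorphism of vector bundles over $F_{k-1}$. The one technical obstacle is verifying that $VF_{k}\to D(F_{k})$ is truly a fibrewise linear isomorphism covering $\tau^{k}$, which reduces to a weight count: any $y_{w'}^{a'}$ with $w'<k$ admits no polynomial contribution of weight $k$ and therefore cannot involve $z_{k}^{i}$, while $z'^{i'}$ is linear in $z^{i}$ with polynomial coefficients in $y_{w<k}$ only; by differentiation these properties carry over to the transition rules of $(\dot{y}_{w}^{a},\dot{z}_{k}^{i})$.
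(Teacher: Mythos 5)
Your argument is correct. Note, however, that the paper itself does not prove this proposition at all: it is one of the statements for which the reader is referred to the original literature (Bruce--Grabowska--Grabowski, \emph{Linear duals of graded bundles\dots}), and the only justification offered in the text is the display of the inherited homogeneous coordinates $(x^{A},y_{w}^{a};\pi_{b}^{k-w+1},\pi_{i}^{1})$ on the quotient. Your proof therefore supplies something the paper omits, and it does so in a structurally clean way: factoring the canonical projection as $\sT^{*}F_{k}\twoheadrightarrow V^{*}F_{k}\twoheadrightarrow D^{*}(F_{k})$, with the first arrow dual to $VF_{k}\hookrightarrow\sT F_{k}$ and the second coming from the fibrewise isomorphism $VF_{k}\cong(\tau^{k})^{*}D(F_{k})$. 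The key point --- that the transition coefficients of the vertical fibre coordinates $(\dot{y}_{w}^{a},\dot{z}_{k}^{i})$ cannot involve $z_{k}^{i}$ because $\partial y_{w}^{a'}/\partial y_{w'}^{b}$ has polynomial weight $w-w'\leq k-1$ --- is exactly the weight count that makes the linearisation functor work, and you state it explicitly where the paper leaves it implicit. Your identification of the coordinates of $\Delta^{1}$-weight $\geq k$ on $\sT^{*}F_{k}$ as precisely $z_{k}^{i}$ (bi-weight $(k,0)$) and $\pi_{A}^{k+1}$ (bi-weight $(k,1)$) is also right, so the composite surjection does coincide, chart by chart, with the defining projection onto $\sT^{*}F_{k}[\Delta^{1}_{\sT^{*}F_{k}}\leq k-1]$, and canonicity of both sides forces the induced map to be the claimed isomorphism of vector bundles over $F_{k-1}$. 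The only cosmetic remark is that what you call the ``one technical obstacle'' is really the heart of the matter, and you do dispose of it correctly.
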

In simpler terms, we can employ homogeneous coordinates inherited from  the cotangent bundle as
  \begin{equation*}
 (\underbrace{x^{A}}_{(0,0)} , \hspace{5pt} \underbrace{y_{w}^{a}}_{(w,0)}; \hspace{5pt} \underbrace{ \pi_{b}^{k-w+1}}_{{(k-w, 1)}} , \underbrace{\pi_{i}^{1}}_{(0,1)})\,.
 \end{equation*}
Note that by passing to total weight the linear dual of a graded bundle of degree $k$ is itself a graded bundle of degree $k$.

\begin{example}
If $F_{k} := \sT^{k}M$, then  $D^{*}(\sT^{k}M) \simeq \sT^{*}(\sT^{k-1} M)$.
\end{example}

\begin{definition}
The \emph{Mironian} of a graded bundle is the double graded bundle defined as
\begin{equation*}
\textnormal{Mi}(F_{k}) := D^{*}(F_{k})[\Delta_{D^{*}(F)}^{1} + k\: \Delta_{D^{*}(F)}^{2} \leq k]
\end{equation*}
\end{definition}

\noindent It is not hard to see that the local coordinates on the Mironian inherited from the coordinates on the linear dual are

\begin{equation*}
 (\underbrace{x^{A}}_{(0,0)} , \hspace{5pt} \underbrace{y_{w}^{a}}_{(w,0)}; \hspace{5pt} \underbrace{\pi_{i}^{1}}_{(0,1)}),
 \end{equation*}

 \noindent and so the Mironian of $F_{k}$ has the structure of a vector bundle over $F_{k-1}$. Moreover, we have the identification $\textnormal{Mi}(F_{k}) \simeq F_{k-1} \times_{M} \bar{F}_{k}^{*}$.

\begin{example}
If $F_{k} =  \sT^{k}M$, then $\textnormal{Mi}(\sT^{k}M) =  \sT^{k-1} M \times_{M}\sT^{*}M $.
\end{example}

\begin{remark}
To our knowledge the above example first appeared in the works of Miron, see \cite{Miron:2003}. His motivation, much like ours, was to develop a good notion of higher order Hamiltonian mechanics and this requires some notion of a dual of $ \sT^{k}M$. Our more general notion of the Mironian of a graded bundle will similarly play a fundamental role in our formulation of higher order Hamiltonian mechanics.
\end{remark}

Amongst all the possible fibrations of $\sT^{*}F_{k}$ we will use the following double graded
structure
\be\label{double-mironian}\xymatrix{
\sT^\ast F_k\ar[rr]^{\sT^*\zt^k} \ar[d]_{\zp_{F_k}} && \textnormal{Mi}(F_{k})\ar[d]^{\sv^*(\zt^k)}\\
F_k\ar[rr]^{\zt^k} && F_{k-1}}
\ee
or, composing the projections onto $F_{k-1}$ with $\zt^{k-1}_0:F_{k-1}\to M$,
\be\label{double-mironian1}\xymatrix{
\sT^\ast F_k\ar[rr]^{\sT^*\zt^k} \ar[d]_{\zp_{F_k}} && \textnormal{Mi}(F_{k})\ar[d]^{\sv^*(\zt^k_0)}\\
F_k\ar[rr]^{\zt^k_0} && M}\,.
\ee
\subsection{Weighted algebroids for graded bundles}
We are now in a position to describe the main geometric object needed to define  mechanics on a graded bundle.  A weighted algebroid should be thought of a generalised algebroid \cite{Grabowski:1999,Grabowska:2006,Grabowska:2008} carrying additional weights. Again for proofs of the statements here the reader is urged to consult \cite{Bruce:2014}.

\begin{definition}\label{def:algebroid}
 A \emph{weighted algebroid  for} $F_{k}$ is a morphisms of triple graded bundles
 \begin{equation}\label{epsilon}
 \epsilon : \sT^{\ast} D(F_{k}) \rightarrow \sT D^{\ast}(F_{k})
 \end{equation}
  covering the identity on the double graded bundle $D^{\ast}(F_{k})$.  The \emph{anchor map} is the map $\rho : D(F_{k}) \rightarrow \sT F_{k-1}$ underlying the map $\epsilon$, see the diagram below. The anchor map induces a graded morphism  $\hat{\rho} := \rho \circ \iota : F_{k} \rightarrow \sT F_{k-1}$ called the \emph{anchor map on $F_k$} or the \emph{k-th anchor map}. A graded bundle $F_k$ equipped with a graded morphism $\hat{\rho}: F_{k} \rightarrow \sT F_{k-1}$ we will call an \emph{anchored graded bundle}. A weighted algebroid is this specified by the pair $(F_{k}, \epsilon)$.  In this case we will also say that $F_{k}$ \emph{carries the structure of a weighted algebroid}.
\end{definition}

\begin{diagram}[htriangleheight=20pt]
\sT^{*} D^{*}(F_{k}) & & \rTo^{\epsilon} & & \sT D^{*}(F_{k}) & & \\
& \rdTo & & & \vLine & \rdTo & \\
\dTo & & D(F_{k}) & \rTo^{\rho} & \HonV & & \sT F_{k-1} \\
& & \dTo & & \dTo & & \\
D^{*}(F_{k}) & \hLine & \VonH & \rTo^{id} & D^{*}(F_{k}) & & \dTo \\
& \rdTo & & & & \rdTo & \\
& & F_{k-1} & & \rTo^{id}& &F_{k-1} \\
\end{diagram}
\medskip

The morphism (\ref{epsilon}) is known to be associated with a 2-contravariant tensor field $\Lambda_{\epsilon}$ on $D^{*}(F_{k})$. If  $\Lambda_{\epsilon}$ associated with $\epsilon$  is a bi-vector field on $D^{*}(F_{k})$, we speak about a \emph{weighted skew algebroid for} $F_{k}$. If the bi-vector field is a Poisson structure then we have a \emph{weighted Lie algebroid} for $F_{k}$.\smallskip

We will from this point on focus on weighted skew/Lie algebroid structures as our examples will be based on these structures and they seem general enough to cover a wide range of potentially interesting situations. To unravel the local structure, let us on $\sT^{*}D(F_{k})$ employ natural homogeneous  local coordinates

\begin{equation}\label{coordinates}
 (\underbrace{x^{A}}_{(0,0,0)}, \hspace{5pt} \underbrace{y_{w}^{a}}_{(w,0,0)}, \hspace{5pt}  \underbrace{\dot{y}_{w}^{b}}_{(w-1,1,0)}, \hspace{5pt} \underbrace{\dot{z}_{k}^{i}}_{(k-1,1,0)} ;\hspace{5pt} \underbrace{p_{B}^{k+1}}_{(k-1,1,1)}, \hspace{5pt}\underbrace{p^{k-w+1}_{c}}_{(k-w-1,1,1)},\hspace{5pt} \underbrace{\pi^{k-w+1}_{d}}_{(k-w,0,1)}, \hspace{5pt} \underbrace{\pi_{j}^{1}}_{(0,0,1)} ).
 \end{equation}

\noindent Similarly, let us on $\sT D^{*}(F_{k})$  employ natural homogeneous coordinates

 \begin{equation}
 (\underbrace{x^{A}}_{(0,0,0)}, \hspace{5pt} \underbrace{y_{w}^{a}}_{(w,0,0)}, \hspace{5pt}  \underbrace{\pi^{k-w+1}_{b}}_{(k-w,0,1)}, \hspace{5pt} \underbrace{\pi_{i}^{1}}_{(0,0,1)} ;\hspace{5pt} \underbrace{\delta x^{B}_{1}}_{(0,1,0)}, \hspace{5pt}\underbrace{\delta y_{w+1}^{c}}_{(w,1,0)},\hspace{5pt} \underbrace{\delta \pi^{k-w+2}_{d}}_{(k-w,1,1)}, \hspace{5pt} \underbrace{\delta \pi^{2}_{j}}_{(0,1,1)} ).
 \end{equation}

\noindent The labels on the coordinates refers to the total weight.  Note that  both $\sT^{*}D(F_{k})$ and $\sT D^{*}(F_{k})$ are naturally triple graded bundles and can be considered as graded bundles of degree $k+1$ by passing  to the total weight. The map $\epsilon$ for a weighted skew/Lie algebroid for $F_{k}$ is in these homogeneous coordinates given by

\begin{eqnarray}\label{epsilon:coordinates}
(\delta x^{A}_{1})\circ \epsilon &=& \dot{y}_{1}^{b}P_{b}^{A},\\
\nonumber (\delta y^{a}_{w+1})\circ \epsilon  & = & \sum \frac{1}{l!} \dot{y}_{w'}^{c}  y_{w_{1}}^{b_{1}} \cdots y_{w_{l}}^{b_{l}} P^{a}_{b_{l}\cdots b_{1}  c} + \delta_{w+1}^{k} \dot{z}_{k}^{i}P_{i}^{a},\\
\nonumber (\delta \pi_{c}^{k-w+2})\circ \epsilon  &=& \delta_{w}^{1}\left(P_{c}^{A}p_{A}^{k+1}  + \dot{z}^{i}_{k} P_{c i}^{j} \pi_{j}^{1}\right)+ \sum \frac{1}{l!} y_{w_{1}}^{b_{1}} \cdots y_{w_{l}}^{b_{l}}P^{d}_{b_{l} \cdots b_{1}c}p_{d}^{k-w'+1}\\
\nonumber &+& \sum \frac{1}{l!} \dot{y}_{w''}^{a}  y_{w_{1}}^{b_{1}} \cdots y_{w_{l}}^{b_{l}}\widehat{P}^{ d}_{b_{l}\cdots b_{1}ac}\pi_{d}^{k-w'+1} + \sum \frac{1}{l!} \dot{y}_{w'}^{a}  y_{w_{1}}^{b_{1}} \cdots y_{w_{l}}^{b_{l}}P^{ i}_{b_{l}\cdots b_{1}a c}\pi^{1}_{i}\\
\nonumber (\delta \pi_{i}^{2})\circ \epsilon  &=&  P_{i}^{a}p_{a}^{2}  + \dot{y}^{a}_{1}P_{a i}^{j}\pi^{1}_{j},
\end{eqnarray}
\noindent  where the sum are over the appropriate weights.

{ We will adopt the  following system of weight symbols and corresponding systems of coordinate indices on $F_{k}$ in order to condense our notation;
$$\begin{array}{c|c}
\text{weight} & \text{index} \\ \hline
0\leq u\leq k-1 & \mu, \nu \\
1\leq U\leq k & I,J \\
0\leq W\leq k & \alpha,\beta \\
1\leq w\leq k-1 &  a,b \\
0 & A \\
k & i
\end{array}$$
\noindent By convention the weight symbol will refer to the total weight of the given coordinate.}

Thus, we employ  $X_{u}^{\mu} =(x^A,y_w^a)$ as coordinates on $F_{k-1}$,  and $Y_{U}^{I}= (\dot y^b_w,\dot z^i_k)$ as the linear coordinates on $D(F_k)$. Finally we shall employ adapted  homogeneous coordinates

$$(X_{u}^{\mu}, Y_{U}^{I}, P^{k+1-u}_{\nu},\zP^{k+1-U}_{J})
$$
on $\sT^*D(F_k)$. Similarly, on $\sT D^*(F_k)$ we employ adapted homogeneous  local coordinates

$$(X_{u}^{\mu},\zP^{U}_{I},\zd X^{\nu}_{u+1},\zd\zP^{U+1}_J)\,.$$

In these coordinates, the map $\ze$, being  the identity on $(X^{\mu}_{u},\zP^{U}_{I})$, can be written in the  compact form
\bea
\zd X^{\mu}_{U}\circ\ze&=&\zr[u]^{\mu}_I(X)Y^I_{U-u}\,,\\
\nonumber\zd\zP^{U+1}_J\circ\ze&=&\zr[u]^{\nu}_J(X)P^{U+1-u}_{\nu}+C[u]^K_{IJ}(X)Y^I_{U'}\zP^{U+1-U'-u}_K\,,
\eea
where $\zr[u]$ and $C[u]$ are homogeneous parts of the structure functions of degree $u=0,\dots,k-1$.

\begin{example}\label{eg:higher tangent epsilon}
Consider the k-th order tangent bundle of a manifold $\sT^{k}M$. As proved in \cite{Bruce:2014}, $\sT^{k}M$ carries the structure of a weighted Lie algebroid. Let us examine this structure explicitly.  First, as we have already seen, $D(\sT^{k}M) \simeq \sT(\sT^{k-1} M)$ and thus $D^{*}(\sT^{k}M) \simeq \sT^{*}(\sT^{k-1} M)$. Then the weighted Lie algebroid for $\sT^{k}M$ is a morphism between the triple graded bundles
\begin{equation*}
\sT^{*}\left( \sT(\sT^{k-1} M)\right) \stackrel{\epsilon}{\longrightarrow} \sT\left( \sT^{*}(\sT^{k-1} M)\right).
\end{equation*}
\noindent Let us now employ homogeneous $\{X_{u}^{\mu}, Y_{U}^{I} \}$  on $D(\sT^{k}M)$. The weighted Lie algebroid structure is then specified by

\begin{equation*}
\delta X^{\mu}_{U} \circ \epsilon = \delta_{I}^{\mu}\: Y_{U}^{I},  \hspace{25pt}\textnormal{and} \hspace{25pt} \delta \Pi^{U+1}_{J} \circ \epsilon = \delta_{J}^{\nu}\: P_{\nu}^{U+1},
\end{equation*}

\noindent in the notation introduced above.
\end{example}

\begin{example}\label{eg:Lie algebroid epsilon}
Let $\mathcal{G} \rightrightarrows M$ be an arbitrary Lie groupoid with source map $\underline{s}: \mathcal{G} \rightarrow M$ and target map $\underline{t}: \mathcal{G} \rightarrow M$. There is also the inclusion map $\iota : M \rightarrow \mathcal{G}$   and a partial multiplication  $(g,h) \mapsto g\cdot h$ which is defined on $\mathcal{G}^{(2)}=\{(g,h)\in\mathcal{G}\ti\mathcal{G}: \underline{s}(g) = \underline{t}(h)\}$. Moreover, the manifold $\mathcal{G}$ is foliated by $\underline{s}$-fibres $\mathcal{G}_{x}= \{ \left.g \in \mathcal{G}\right| \underline{s}(g) =x\}$, where $x \in M$. As by definition the source (and target) map is a submersion, the $\underline{s}$-fibres are themselves smooth manifolds. Geometric objects associated with this foliation will be given the superscript $\underline{s}$.  For example, the distribution tangent to the leaves of the foliation will be denoted by $\sT \mathcal{G}^{\underline{s}}$. Let us now consider the $k$th order version of this, that is the subbundle $\sT^{k}\mathcal{G}^{\underline{s}} \subset \sT^{k}\mathcal{G}$ consisting of all higher order velocities tangent to some $\underline{s}$-leaf $\mathcal{G}_{x}$. That is we identify $\sT^{k}\mathcal{G}^{\underline{s}} $ with the union of $\sT^{k}\mathcal{G}_{x}$  over all $\underline{s}$-leaves $\mathcal{G}_{x}$. The relevant graded bundle  (over $\iota(M) \simeq M $) here is (c.f. \cite{Jozwikowski:2014})
\begin{equation*}
 F_{k} = \textnormal{A}^{k}(\mathcal{G}) := \left. \sT^{k}\mathcal{G}^{\underline{s}}\right|_{\iota(M)},
 \end{equation*}
which of course inherits its graded  bundle structure as a substructure of $\sT^{k}\mathcal{G}$  with respect to the projection $\zt_{k}:\textnormal{A}^k(\mathcal{G})\to M$. The claim is that $ \textnormal{A}^{k}(\mathcal{G})$ canonically carries the structure of a weighted Lie algebroid, this was first proved in \cite{Bruce:2014}. To see this weighted Lie algebroid we first need the linearisation. In \cite{Bruce:2014} it was shown that the linearisation of $\textnormal{A}^{k}(\mathcal{G})$ is given by
\begin{equation}\label{prolongation}
D(\textnormal{A}^{k}(\mathcal{G}) ) \simeq \{ (Y,Z) \in \textnormal{A}(\mathcal{G}) \times_{M} \sT\textnormal{A}^{k-1}(\mathcal{G})|\quad \bar{\rho}(Y) = \sT \bar{\tau}(Z)  \}\,,
\end{equation}
viewed as a vector bundle over $\textnormal{A}^{k-1}(\mathcal{G})$ with respect to the obvious projection of part $Z$ onto ${\textnormal{A}^{k-1}(\mathcal{G})}$, where $\bar{\rho} : \textnormal{A}(\mathcal{G}) \rightarrow \sT M$ is the standard anchor of the Lie algebroid and $\bar{\tau}: \textnormal{A}^{k-1} (\mathcal{G}) \rightarrow M$ is the obvious projection.  Note that this object is a canonical example of a \emph{Lie algebroid prolongation} \cite{Carinena:2001,Popescu:2001,Martinez:2001}.

\smallskip
We shall pick local coordinates $X_{u}^{\mu}= (x^{A}, y^{a}_{w})$ on  and $\textnormal{A}^{k-1}(\mathcal{G})$ and the corresponding linear coordinates $Y_{U}^{I} = (\dot{y}_{1}^{b}, \dot{y}_{w+1}^{c})$ on the linearisation  $D(\textnormal{A}^{k}(\mathcal{G}) )$.
Then, with the above choice of coordinates, the natural adapted coordinates on $\sT^{*}D(\textnormal{A}^{k}(\mathcal{G}))$  are  $\left(X_{u}^{\mu} , Y_{U}^{I} , P_{\nu}^{k-1 -u}, \Pi_{J}^{k-1-U} \right)$ and similarly let us employ local coordinates $\left( X_{u}^{\mu} , \Pi_{I}^{k-1-U}, \delta X_{U}^{\nu},  \delta \Pi_{J}^{U+1}\right)$ on $\sT D^{*}(\textnormal{A}^{k}(\mathcal{G}))$. The weighted Lie algebroid structure is given by
\bea
\zd X^{\mu}_{U}\circ\ze&=&\zr[0]^{\mu}_I(x)Y^I_{U}\,,\\
\nonumber\zd\zP^{U+1}_J\circ\ze&=&\zr[0]^{\nu}_J(x)P^{U+1}_{\nu}+ \delta_{U}^{k}\:C[0]^K_{IJ}(x)Y^I_{1}\zP^{U}_K\,,
\eea
\noindent where $\zr[0]_{I}^{\mu} = (\zr_{a}^{A}(x),\delta_{a}^{b} )$  and  $C[0]_{JI}^{K} =C_{ab}^{c}(x)$.  Here $(\zr_{a}^{A}(x), C_{ab}^{c}(x))$ are the structure functions of the Lie algebroid $\textnormal{A}(\mathcal{G})$.
\end{example}

\begin{example}\label{e-lalg}
A particular case of the above construction is the case of a Lie group $\mathcal{G}=G$. Then, we can identify $\textnormal{A}^k(G)$ with $\mathfrak{g}_{k}=\mathfrak{g}[1]\ti\dots\ti\mathfrak{g}[k]$, where $\mathfrak{g}$ is the Lie algebra of $G$, and
$D(\textnormal{A}^k(G))$ with $\mathfrak{g}\ti\sT\mathfrak{g}_{k-1}$ viewed as a vector bundle over $\mathfrak{g}_{k-1}$.  Here, $\mathfrak{g}[i]$ is the space $\mathfrak{g}$ with $i$th shift in the grading, so that linear functions on $\mathfrak{g}$ have weight $i$.
\end{example}
\begin{example}\label{e-pair}
Another particular case is the case of a pair groupoid $\mathcal{G}=M\ti M$. It is easy to see that $\textnormal{A}^k(M\ti M)=\sT^k M$.
\end{example}

\begin{remark} The weighted Lie algebroid  $D(\textnormal{A}^2(\mathcal{G}))$, without reference to any graded structure, first appeared in the works of  Mart\'{\i}nez \cite{Martinez:2001} and  Cari\~{n}ena \&  Mart\'{\i}nez \cite{Carinena:2001}, which they referred to as the \emph{prolongation of a Lie algebroid}.   A general prolongation has also been considered by Popescu \& Popescu \cite{Popescu:2001}. The motivation for these works was to understand geometric mechanics on Lie algebroids. Here we see that the prolongation of a Lie algebroid naturally appears in the context of weighted algebroids as does the `higher order' prolongations.
\end{remark}

Actually, as was shown in \cite{Jozwikowski:2014}, the graded bundle $\textnormal{A}^k(\mathcal{G})$ is completely determined by the Lie algebroid $\textnormal{A}=\textnormal{A}(\mathcal{G})$, or better to say, by the anchored bundle structure of $\zt:\textnormal{A}\to M$. Moreover, in  \cite{Jozwikowski:2014}  the following iterative procedure to construct $\textnormal{A}^k$ for an anchored bundle was presented:
\bea A^2&=&\{ Z\in\sT A\ |\ \bar\zr\circ\zt(Z)=\sT\zt(Z)\}\,,\\
A^{k+1}&=&\sT A^k\cap\sT^k A\quad\text{for}\quad k\ge2\,.\label{inductive}
\eea
In the latter condition we clearly understand $\sT A^k$ (inductively) and $\sT^k A$ as subsets of $\sT\sT^{k-1}A$. 

It is easy to see (c.f. \cite[Section 2]{Grabowska:2008}) that $A^2$ is the subbundle of $\sT A$ of first jets (tangent prolongations) of \emph{admissible curves} in the anchored bundle $A$, i.e. curves satisfying
$$\bar\zr\circ \zg=\st(\zt\circ\zg)\,,$$
where $\st$ denotes the tangent prolongation. In \cite{Grabowska:2008}, $A^2$ was denoted $\sT^{hol}A$ and its elements were called \emph{holonomic vectors}. The set of $A$-holonomic vectors $\sT^{hol}A$ can be equivalently characterized as the subset in $\sT A$ which is mapped {\it via} $\sT\bar\zr:\sT A\ra\sT\sT M$ to classical holonomic vectors $\sT^2 M$, that justifies this name. In other words,
$$A^2=(\sT\bar\zr)^{-1}(\sT^2 M)\,.$$
The inductive definition gives easily the following.
\begin{theorem}
If $A$ is an anchored bundle, then $A^k$, $k\ge 1$, is the bundle of $(k-1)$-jets of admissible curves in $A$. Alternatively,
\be\label{holonomic}  A^{k}= (\sT^{k-1}\bar\zr)^{-1}(\sT^k M)\,,\ee
where we view $\sT^{k-1}\bar\zr$ as the map
\be\label{higher-rho}\sT^{k-1}\bar\zr:\sT^{k-1}A\to\sT^{k-1}\sT M\simeq\sT\sT^{k-1}M\,.
\ee
\end{theorem}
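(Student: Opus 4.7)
The plan is to proceed by induction on $k$, using the inductive definition $A^{k+1}=\sT A^k\cap\sT^k A$ valid for $k\ge 2$. The base case $k=1$ is trivial, and $k=2$ reduces to the characterization $A^2=(\sT\bar\zr)^{-1}(\sT^2 M)$ already recorded in the text via the description of $\sT^2 M$ as holonomic vectors in $\sT\sT M$. I would deduce the jet interpretation as a by-product of the preimage formula rather than prove it separately.

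For the inductive step, assume $A^k=(\sT^{k-1}\bar\zr)^{-1}(\sT^k M)\subset\sT^{k-1}A$. Applying the tangent functor, together with a transversality check for $\sT^{k-1}\bar\zr$ against $\sT^k M$ that is routine in the anchored-bundle coordinates $\dot x^A=\zr^A_a(x)y^a$, yields
\[
\sT A^k=(\sT\sT^{k-1}\bar\zr)^{-1}(\sT\sT^k M)=(\sT^k\bar\zr)^{-1}(\sT\sT^k M),
\]
the second equality expressing the functoriality $\sT\sT^{k-1}=\sT^k$ on morphisms. Intersecting with $\sT^k A\subset\sT\sT^{k-1}A$ restricts the codomain of $\sT^k\bar\zr$ to $\sT^k\sT M$, and thus
\[
A^{k+1}=\sT A^k\cap\sT^k A=\{Y\in\sT^k A : \sT^k\bar\zr(Y)\in\sT\sT^k M\cap\sT^k\sT M\}.
\]
The classical holonomic formula $\sT^{k+1}M=\sT\sT^k M\cap\sT^k\sT M$, viewed inside $\sT\sT^{k-1}\sT M$ via the iterated canonical flip $\sT^{k-1}\sT M\simeq\sT\sT^{k-1}M$, then gives $A^{k+1}=(\sT^k\bar\zr)^{-1}(\sT^{k+1}M)$, closing the induction.

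For the jet description, I would verify directly that an element $X=\st^{k-1}\zg(0)\in\sT^{k-1}A$ satisfies $\sT^{k-1}\bar\zr(X)=\st^{k-1}(\bar\zr\circ\zg)(0)\in\sT^k M$ if and only if the identity $\bar\zr\circ\zg=\st(\zt\circ\zg)$ holds at $0$ up to order $k-1$, i.e.\ the $(k-1)$-jet of $\zg$ is admissible. Conversely, every such admissible jet is realised by a genuine admissible curve: choose any smooth extension $y(t)$ of the fibre data beyond its prescribed $(k-1)$-jet and solve the ODE $\dot x^A(t)=\zr^A_a(x(t))y^a(t)$; the resulting curve in $A$ is admissible and its Taylor coefficients match the prescribed ones since the membership $\sT^{k-1}\bar\zr(X)\in\sT^k M$ encodes exactly the successive derivatives of admissibility at $0$. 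The main obstacle is bookkeeping the canonical isomorphisms $\sT^p\sT^q M\simeq\sT^{p+q}M$ on holonomic elements and making the transversality claim underlying the preimage-of-tangent identity explicit; both are standard but notationally heavy.
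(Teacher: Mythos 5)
Your proposal is correct and its core --- induction on $k$ via $A^{k+1}=\sT A^k\cap\sT^k A$, applying the tangent functor to the inductive hypothesis and intersecting with $\sT^k A$ by means of the classical holonomic identity $\sT^{k+1}M=\sT\sT^k M\cap\sT^k\sT M$ --- is exactly the paper's argument for the preimage formula (\ref{holonomic}); the paper even leaves implicit the same transversality point needed to commute $\sT$ with preimages, which you at least flag. The only divergence is minor: you obtain the jet description as a corollary of the preimage formula via a coordinate/ODE realization of admissible jets, whereas the paper reads it off directly from the inductive definition; both routes work.
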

\begin{proof} Indeed, the inductive characterization of $A^{k+1}$ (\ref{inductive}) tells us that the elements of $A^{k+1}$ are jets of those $k$th tangent prolongations of curves in $A$ which project onto $(k-1)$th jets of admissible curves, thus are jets of admissible curves.

The inductive proof of (\ref{holonomic}) follows easily from the fact, for $k\ge 2$, that the map (\ref{higher-rho}) is a part of
$$\sT\sT^{k-2}\bar\zr:\sT\sT^{k-2}A\to\sT\left(\sT\sT^{k-2}M\right)$$
for which $\sT A^{k-1}$ can be characterized as the inverse image of $\sT\sT^{k-1}M$. Hence,
$$\sT^kA\cap\sT A^{k-1}=\sT^kA\cap\left((\sT\sT^{k-2}\bar\zr)^{-1}(\sT\sT^{k-1}M)\right)=
(\sT^{k-1}\bar\zr)^{-1}(\sT^k M)\,.$$
\end{proof}

Note we have made no reference to any Lie groupoid structure in this construction. Having defined $A^k$, we can use (\ref{prolongation}) to obtain $D(A^k)$ (just forgetting $\mathcal{G}$).

The situation where we have some additional structure on $\textnormal{A}$ is of course more interesting. If we suppose that $\textnormal{A}$ is in fact a Lie algebroid, not necessarily integrable, then the $\mathcal{GL}$-bundle $D(\textnormal{A}^{k})$ is a weighted Lie algebroid of degree $k$, where the brackets are defined using projectable sections, see for example \cite{Cortes:2006}. The space of projectable sections is closed under this Lie bracket; this follows from
\be\label{D}\bar\zr([X,Y])=[\bar\zr(X),\bar\zr(Y)]\,,\ee
and the fact that $\bar{\tau}$ is a projection.  Because we can chose a local basis consisting of projectable sections this Lie bracket extends to all sections.\smallskip

Note that we do not have explicit reference to the Jacobi identity in defining this Lie bracket, the only essential piece for consistency is the compatibility  of the anchor with the brackets (\ref{D}), that is, we only require the underlying structure to be that of an \emph{almost Lie algebroid} in the terminology of \cite{Grabowski:2011}. The condition (\ref{D}) has appeared as a necessary condition for the possibility of developing an appropriate variational calculus already in \cite{Grabowska:2008} (see also \cite{Grabowski:2011,Jozwikowski:2014}).

The above observations lead to the following theorem.
\begin{theorem}\label{thrm:almost Lie}
For any almost Lie algebroid $(\textnormal{A}, [,] , \bar{\rho})$,  the $\mathcal{GL}$-bundle $D(\textnormal{A}^{k})$  comes equipped with the structure of a weighted skew algebroid whose construction is outlined above.
\end{theorem}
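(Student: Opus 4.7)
The plan is to exhibit the required morphism $\epsilon:\sT^{*}D(\textnormal{A}^{k})\to\sT D^{*}(\textnormal{A}^{k})$ by the same local formulae that worked in the integrable case (Example \ref{eg:Lie algebroid epsilon}), and then to argue that the only compatibility needed for those formulae to define a global bi-vector field on $D^{*}(\textnormal{A}^{k})$ is precisely the anchor--bracket condition (\ref{D}), which holds by hypothesis. In other words, the point is that nowhere in the construction of $\epsilon$ (as opposed to the Poisson property of $\Lambda_{\epsilon}$) does one actually invoke the Jacobi identity for $[\cdot,\cdot]$.

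First I would unravel the linearisation (\ref{prolongation}) and work with projectable sections of $D(\textnormal{A}^{k})\to\textnormal{A}^{k-1}$, i.e. pairs $(X,\mathcal{X})$ with $X\in\Gamma(\textnormal{A})$, $\mathcal{X}\in\Vect(\textnormal{A}^{k-1})$ subject to $\bar{\rho}(X)=\sT\bar{\tau}(\mathcal{X})$. Such pairs span $\Gamma(D(\textnormal{A}^{k}))$ locally because $\bar{\tau}$ is a surjective submersion, so a local basis of projectable sections exists. On such pairs define
\[
\llbracket (X_{1},\mathcal{X}_{1}),(X_{2},\mathcal{X}_{2})\rrbracket \; := \; \bigl([X_{1},X_{2}]_{\textnormal{A}},\, [\mathcal{X}_{1},\mathcal{X}_{2}]\bigr),
\]
together with the anchor $\rho(X,\mathcal{X})=\mathcal{X}$. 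To see that the right hand side is again projectable one computes, using that $\sT\bar{\tau}$ intertwines Lie brackets of $\bar{\tau}$-related vector fields,
\[
\sT\bar{\tau}\bigl([\mathcal{X}_{1},\mathcal{X}_{2}]\bigr)=[\bar{\rho}(X_{1}),\bar{\rho}(X_{2})]=\bar{\rho}([X_{1},X_{2}]_{\textnormal{A}}),
\]
where the second equality is exactly the almost-Lie condition (\ref{D}). Skew-symmetry and the Leibniz rule on the second slot transfer verbatim from the bracket of vector fields and from the almost-Lie bracket on $\textnormal{A}$, so one obtains a skew, anchored bracket on $\Gamma(D(\textnormal{A}^{k}))$.

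Next I would translate this skew-algebroid structure into the morphism $\epsilon$ following the dictionary laid out in Definition \ref{def:algebroid} and in the local description (\ref{epsilon:coordinates}). Choosing adapted coordinates $(X_{u}^{\mu},Y_{U}^{I},P_{\nu}^{k+1-u},\Pi_{J}^{k+1-U})$ on $\sT^{*}D(\textnormal{A}^{k})$ and $(X_{u}^{\mu},\Pi_{I}^{U},\delta X_{u+1}^{\nu},\delta\Pi_{J}^{U+1})$ on $\sT D^{*}(\textnormal{A}^{k})$, the anchor $\bar{\rho}$ and the structure constants $C_{ab}^{c}(x)$ of the almost-Lie bracket furnish exactly the same coefficients $\rho[0]_{I}^{\mu}$ and $C[0]_{IJ}^{K}$ that appeared in Example \ref{eg:Lie algebroid epsilon}. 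Declaring
\[
\delta X_{U}^{\mu}\circ\epsilon=\rho[0]_{I}^{\mu}(x)\,Y_{U}^{I},\qquad \delta\Pi_{J}^{U+1}\circ\epsilon=\rho[0]_{J}^{\nu}(x)\,P_{\nu}^{U+1}+\delta_{U}^{k}\,C[0]_{IJ}^{K}(x)\,Y_{1}^{I}\,\Pi_{K}^{U}
\]
therefore defines $\epsilon$ locally; well-definedness under admissible changes of coordinates reduces, via the transformation rules (\ref{eqn:translaws}), to (i) tensoriality of $\bar{\rho}$ under $\textnormal{A}$-changes of frame, and (ii) skew-symmetry plus condition (\ref{D}), both of which are part of the almost-Lie data. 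The associated $2$-contravariant tensor $\Lambda_{\epsilon}$ is automatically a bi-vector because the bracket is skew, which is exactly the content of being a weighted skew algebroid.

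The step I expect to be the most delicate is the global well-definedness in the third paragraph: one must check that replacing a local basis of projectable sections by another does not alter $\epsilon$, and that at a point of $\textnormal{A}^{k-1}$ lying over $x\in M$ only the value $\bar{\rho}(X)_{x}$ and $\mathcal{X}$ itself (not the entire section $X$) enter the definition of $\rho(X,\mathcal{X})$; this is the analogue of the standard argument that the Lie-algebroid bracket of sections of the prolongation descends from the pointwise data, and it is precisely at this point that one wants the compatibility (\ref{D}) rather than the full Jacobi identity. Once this is settled, the rest is a formal verification that the resulting $\epsilon$ covers $\mathrm{id}_{D^{*}(\textnormal{A}^{k})}$ and respects all three weights, which is immediate from the coordinate description above.
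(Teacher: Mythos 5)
Your proposal is correct and follows essentially the same route as the paper, which proves the theorem only by the discussion immediately preceding it: define the bracket on projectable sections $(X,\mathcal{X})$ componentwise, observe that closure of projectable sections under this bracket is exactly the almost-Lie condition (\ref{D}) combined with the fact that $\bar{\tau}$-related vector fields have $\bar{\tau}$-related brackets, and extend to all sections via a local basis of projectable sections. Your version merely makes explicit the one-line computation $\sT\bar{\tau}([\mathcal{X}_{1},\mathcal{X}_{2}])=[\bar{\rho}(X_{1}),\bar{\rho}(X_{2})]=\bar{\rho}([X_{1},X_{2}])$ and the coordinate form of $\epsilon$, both of which the paper leaves implicit by reference to Example \ref{eg:Lie algebroid epsilon} and \cite{Bruce:2014}.
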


For further examples and in particular slightly more general weighted algebroids, that is where we can relax the underlying double graded structure not to necessarily be associated with the linearisation of a graded bundle, see \cite{Bruce:2014}.


\section{Phase spaces, mechanics on algebroids and constraints}\label{sec:PhaseSpacesMechanicsConstraints}
We will later make use of the affine structure of graded bundles, geometric mechanics on Lie algebroids and vakononic constraints. In order for this paper to be relatively self-contained we review what we will need in the later sections of this paper.

\subsection{Affine phase spaces}\label{sec:Affine phase spaces}
Since we will make use of the affine structure of the fibrations $F_k\to F_{k-1}$, let us present
basic concepts  of the geometry of affine bundles.

Let $A$ be an affine space modelled on a vector space $\sv(A)$.
This means that the commutative group $\sv(A)$ acts freely and transitively on $A$ by addition
$$A\ti\sv(A)\ni (a,v)\mapsto a+v\,.$$
In other words, the naturally defined differences $a_1-a_2$ of points of $A$ belong to $\sv(A)$.
On affine spaces there are defined \emph{affine combinations} of points, $ta_1+(1-t)a_2$, for all $a_1,a_2\in A$ and $t\in\R$. Note that {\it convex combinations} are those affine combinations
$ta_1+(1-t)a_2$ for which $0\le t\le 1$.

All this can be extended to affine bundles $\zt:A\to N$ modelled on a vector bundle $\sv(\zt):\sv(A)\to N$. Any vector bundle is an affine bundle and fixing a section $a_0$ of $A$ induces an isomorphism of affine bundles $A$ and $\sv(A)$,
$$\sv_x(A)\ni v\mapsto a_0(x)+v\in A_x\,.$$
Using coordinates $(x^a)$ in the open set $\mathcal{O}\subset N$, a local section $a_0: \mathcal{O}\rightarrow A$,
and local base of sections $e_i:\mathcal{O}\rightarrow \sv(A)$, we can construct an adapted coordinate
system $(x^a, y^i)$ in $\tau^{-1}(\mathcal{O})$. An element $a\in A$ can be written as
$a=a_0(\tau(a))+y^ie_i(\tau(a))$.
\smallskip

For the affine space $A$, we consider its {\it affine dual}, i.e.
the space $A^\dag=\Aff(A,\R)$ of all affine maps from $A$ to $\R$.
\begin{definition}\label{def:av}(\cite{Grabowska:2004,Grabowska:2007})
An \emph{AV-bundle} (a \emph{bundle of affine values}) is an affine bundle $\zz:\sZ\to \cM$ modelled on a trivial one-dimensional vector bundle $\cM\ti \R$
\end{definition}
\noindent Sections of the AV-bundle $\zeta$ are regarded as affine analogs of functions on a manifold $\cM$.
The bundle $\zt^\dag:A^\dag \longrightarrow N$, where $A^\dag=\Aff(A,\R)$ is
the set of all affine maps on fibres of $\zt$, is called the \emph{affine dual bundle}. Instead of $A^\dag(\R)$ we will write simply $A^\dag$.

Every affine map $\phi:A_1\to A_2$ has a well-defined {\it linear part},  $\sv(\phi):\sv(A_1)\to\sv(A_2)$, therefore there is a projection
\begin{equation}\label{theta}\zz: A^\dag\longrightarrow \sv(A)^\ast.\end{equation}
The above bundle, denoted with $\AV(A^\dag)$,  is a canonical example of an AV-bundle which is modelled on
\begin{equation}\label{hb}
\sv(A)^\ast\times \R\,.
\end{equation}
Using the dual base sections
$\ze^i:\mathcal{O}\rightarrow \sv(A)^\ast$, we construct
an adapted coordinate system $(x^a, p_i, r)$ on $(\zt^\dag)^{-1}(\mathcal{O})$.
An affine map $\varphi$ on $A_q$ can be written as $\varphi(a)=p_i\ze^i(a-a_0(q))-r$, i.e. $r=-\varphi(a_0)$.
The map $\zz$ in coordinates reads $(x^a,p_i,r)\mapsto(x^a,p_i)$.\smallskip
\begin{remark}\label{r1} The choice of coordinate $r$ with the ``minus" sign may seem unnatural, but is well motivated (c.f. \cite{Grabowska:2004}). Assume for simplicity that the coordinates $x$ are not present, so we deal with just an affine space $A$.

Having coordinates $(p,r)$ on $A^\dag$ we want to identify (locally) sections of $\zz:\A^\dag=\Aff(A,\R)\to \sv^*(A)$ with functions on $\sv^*(A)$.
For instance, any $a\in A$ defines canonically a section $\zs_a$ of $\zz$ being the zero-level set of the function $f_a:\Aff(A,\R)\to\R$, $f_a(\zf)=\zf(a)$.

In our coordinates, $f_a(p,r)=\la p,a-a_0\ran-r$, so that the image of $\zs_a$ is $\{ (p,\la p, a-a_0\ran)\}$, so correctly the graph of $a-a_0\in\sv(A)$ as the linear function on $\sv^*(A)$. Generally, in the introduced canonical coordinates we will interpret functions $F:\sv^*(A)\to\R$ as sections $(x,p)\mapsto(x,p,-F(x,p))$ of the AV-bundle
$\zz$.
\end{remark}

As we have already mentioned, in many constructions functions on a manifold can be replaced by sections of an AV-bundle
over that manifold. We can obtain also an affine analog of the differential of a function
and an affine version of the cotangent bundle as follows. Given an AV-bundle $\zz:\sZ\to \cM$ and
$\sigma_1,\sigma_2\in\Sec(\sZ)$, $\sigma_1-\sigma_2$ may be seen as a map
$$\sigma_1-\sigma_2:\cM\to \R\,,$$
so the differential
$$\xd(\sigma_1-\sigma_2)(m)\in \sT^*_m\cM$$
is well defined.

\begin{definition}\label{def:phase}
The \emph{phase bundle} $\sP\sZ$ of an AV-bundle $\sZ$ is the affine bundle of cosets $\uxd \sigma(m)=[(m,\sigma)]$
(`affine differentials') of the equivalence relation
\begin{equation*}(m_1,\sigma_1)\sim(m_2,\sigma_2)\ \Leftrightarrow\ m_1=m_2\,,\quad \xd(\sigma_1-\sigma_2)(m_1)=0\,.\end{equation*}
\end{definition}

The projection $\sP\zz:\sP\sZ\to\cM$ makes $\sP\sZ$ into an affine bundle modelled on $\sT^*\cM$. Indeed, fixing a section $\sigma_0:\cM\to \sZ$, we get a diffeomorphism
\begin{equation*}\psi:\sP\sZ\to\sT^*\cM\,, \quad \uxd \sigma(m)\mapsto \xd(\sigma-\sigma_0)(m)\,.\end{equation*}
Moreover, as the canonical symplectic form on $\sT^*\cM$ is linear and invariant with respect to translations by closed 1-forms, its pull-back does not depend on the choice of $\sigma_0$, so
$\sP\sZ$ is canonically a symplectic manifold.\smallskip

Like in the case of the cotangent bundle $\sT^*\cM$, where the image $\xd\! f(\cM)$ of the differential of a function $f$ on $\cM$ yields a lagrangian submanifold, also the image $\uxd \sigma(\cM)$ of a section of $\sigma\cM\to\sZ$ of $\sZ$ is a lagrangian submanifold in $\sP\sZ$.

\begin{example} For an affine bundle $\zt:A\to N$, take $\Z=\AV(A^\dag)$. Then, $\sP(\AV(A^\dag))$ is an affine bundle over $\sv^*(A)$ which is canonically symplectic. We will denote $\sP(\AV(A^\dag))$, with some abuse of notation, with $\sP A^\dag$.
It is easy to see that $\sP A^\dag$ is also canonically a vector bundle over $A$. Both bundle structures make the affine phase bundle into a \emph{double affine bundle} \cite{Grabowska:2007,Grabowski:2010}. Actually, one affine structure is linear, so we can speak about a \emph{vector-affine bundle}. The situation is similar to that with the cotangent bundle $\sT^*A$ which, besides the vector fibration over $A$ is canonically an affine bundle over  $\sv^*(A)$. This is an affine analog of the well-known vector fibration of $\sT^*E$ over $E^*$, for a vector bundle $E$.
The local identification of $A^\dag$ with $\sv^\ast(A)$ by means of the local section $a_0$ yields the local identification of $\sP A^\dag$ with $\sT^\ast\sv^\ast(A)$. We can therefore use for $\sP A^\dag$ coordinates $(x^a, p_i, \eta_a, y^i)$ pulled back from $\sT^\ast\sv^\ast(A)$. The canonical symplectic form $\omega_{A^\dag}$ in coordinates reads
$\omega_{A^\dag}=\xd \eta_a\wedge \xd x^a+\xd y^i\wedge \xd p_i$.
\end{example}

We have the following analog of the isomorphism (\ref{iso}).
\begin{theorem}\label{th:1} There is a canonical isomorphism $\cR_A$ of vector-affine bundles
\be\label{affR}\xymatrix{
 & \sT^*A \ar[rrr]^{\cR_A} \ar[dr]^{\sT^*\zt}
 \ar[ddl]_{\zp_A}
 & & & \sP A^\dag\ar[dr]^{\sP^*\zt}\ar[ddl]_/-10pt/{\sP\zt}
 & \\
 & & \sv^*(A)\ar@{=}[rrr]\ar[ddl]_/-10pt/{\sv^*(\tau)}
 & & & \sv^*(A) \ar[ddl]_{\sv^*(\tau)}\\
 A\ar@{=}[rrr]\ar[dr]^{\zt}
 & & & A\ar[dr]^{\zt} & &  \\
 & M\ \ar@{=}[rrr]& & &\ M &
}
\ee
covering the identities over $A$ and $\sv^*(A)$, which is simultaneously an anti-symplectomorphism.
\end{theorem}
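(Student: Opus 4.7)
My approach is to write $\cR_A$ down in local adapted coordinates, modelled on the classical cotangent-dual isomorphism $\sT^*E\simeq \sT^*E^*$ for vector bundles $E$, and then to verify that the definition does not depend on the choices used to set up the coordinates. Starting from a local section $a_0 : N \to A$ and a local frame $e_i$ of $\sv(A)$, one obtains the adapted coordinates described before the statement: $(x^a, y^i)$ on $A$, $(x^a, p_i)$ on $\sv^*(A)$, and $(x^a, p_i, r)$ on $A^\dag$. These induce coordinates $(x^a, y^i, \eta_a, p_i)$ on $\sT^*A$ for the covector $\eta_a \xd x^a + p_i \xd y^i$ and coordinates $(x^a, p_i, \eta_a, y^i)$ on $\sP A^\dag \simeq \sT^*\sv^*(A)$ for $\eta_a \xd x^a + y^i \xd p_i$, the latter identification being provided by $a_0$. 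In these coordinates I would define
\begin{equation*}
\cR_A(x^a, y^i, \eta_a, p_i) = (x^a, p_i, -\eta_a, y^i).
\end{equation*}
Reading off the projections, this map covers the identities on both $A$ and $\sv^*(A)$, is linear on the fibres of $\sT^*A\to A$, and affine on the fibres over $\sv^*(A)$.

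The main step is to check independence from the choice of $a_0$. Under $a_0 \mapsto a_0 + \zs$ with $\zs = \zs^i(x)\, e_i$, one has $\tilde y^i = y^i - \zs^i$ on $A$, and from $r = -\varphi(a_0)$, $\tilde r = r - \zs^i p_i$ on $A^\dag$. Rewriting the covector in the new coordinates via $\xd\tilde y^i = \xd y^i - \partial_a\zs^i\, \xd x^a$ yields the transformation $\tilde \eta_a = \eta_a + p_i\,\partial_a\zs^i$, $\tilde p_i = p_i$ on $\sT^*A$, while using the identification $\sP A^\dag \simeq \sT^*\sv^*(A)$ together with $\xd\tilde r = \xd r - \zs^i\,\xd p_i - p_i\,\partial_a\zs^i\,\xd x^a$ gives $\tilde \eta_a = \eta_a - p_i\,\partial_a\zs^i$, $\tilde y^i = y^i - \zs^i$, $\tilde p_i = p_i$ on $\sP A^\dag$. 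A direct substitution then confirms that the formula above intertwines the two; independence from the frame $e_i$ is analogous and easier. I expect this bookkeeping to be the only real obstacle, though entirely routine.

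For the anti-symplectic property I would compute in the same coordinates. Since $\omega_A = \xd \eta_a \wedge \xd x^a + \xd p_i \wedge \xd y^i$ and $\omega_{A^\dag} = \xd \eta_a \wedge \xd x^a + \xd y^i \wedge \xd p_i$, and $\cR_A$ acts as the identity on $x^a, p_i, y^i$ and sends $\eta_a \mapsto -\eta_a$, a direct pullback gives
\begin{equation*}
\cR_A^* \omega_{A^\dag} = -\xd \eta_a \wedge \xd x^a + \xd y^i \wedge \xd p_i = -\omega_A,
\end{equation*}
completing the argument.
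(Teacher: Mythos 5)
Your proposal is correct, and it is worth noting that the paper does not actually prove Theorem~\ref{th:1} at all: it records the coordinate formula for $\cR_{F_k}$ in the special case $A=F_k$ and otherwise defers to the identification $\sP(\bA)\simeq\sP(\bA^{\#})$ for special affine bundles established in the cited earlier work of Grabowska--Grabowski--Urba\'nski. Your coordinate definition $\cR_A(x^a,y^i,\eta_a,p_i)=(x^a,p_i,-\eta_a,y^i)$ agrees with the paper's formula (it is the affine analogue of $\cR_E(x,y,p,\pi)=(x,\pi,-p,y)$), and the verification of the projections, of $\cR_A^*\omega_{A^\dag}=-\omega_A$, and of independence of the frame $e_i$ are all routine and correct. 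The only genuinely delicate point is the transformation rule for the coordinates $(\eta_a,y^i)$ on $\sP A^\dag$ under $a_0\mapsto a_0+\zs$: this depends on the sign convention built into the identification $\sP A^\dag\simeq\sT^*\sv^*(A)$ (cf.\ the ``minus sign'' discussion in Remark~\ref{r1}), and with the opposite sign the formula would fail to be well defined. Your choice $\tilde\eta_a=\eta_a-p_i\partial_a\zs^i$, $\tilde y^i=y^i-\zs^i$ is the correct one --- it is the rule under which the canonical sections $\zs_a$ (whose images are the graphs of the linear functions $\la\cdot,a-a_0\ran$ on $\sv^*(A)$) transform consistently and under which $\omega_{A^\dag}$ is invariant --- so your intertwining computation goes through; it would strengthen the write-up to make this consistency check with the sections $\zs_a$ explicit rather than deriving the rule only from $\xd\tilde r$.
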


More specifically, if $A$ is the affine bundle $\zt^k:F_k\to F_{k-1}$, then $\sv^*(A)=F_{k-1} \times_{M}\bar{F}_{k}^*$ is the Mironian $\textnormal{Mi}(F_{k})$ of $F_k$.  In coordinates $(X^{\mu}_{u}, X^i_k, \zX_{\nu}^{k-u+1}, \Theta_i^1)$ in $\sT^\ast F_k$
and $(X^{\mu}_{u}, \Theta^1_i, \zG^{k+1-u}_{\mu}, X^i_k)$ in $\sP F_k^\dag$ the isomorphism $\cR_{F_k}$ reads

$$(X^{\mu}_{u}, \Theta^1_i, \zG^{k+1-u}_{\nu}, X^i_k)\circ \cR_{F_k}=(X^{\mu}_{u}, X^i_k, -\zG_{\nu}^{k-u+1}, \Theta_i^1) .$$

\begin{remark}
The above isomorphism is a particular case of the identification $\sP(\bA)\simeq\sP(\bA^\#)$
valid for any \emph{special affine bundle} $\bA$ \cite{Grabowska:2007}. Note also that the above diagram reduces to (\ref{double-mironian}) if as the affine bundle $\zt:A\to N$ we take $\zt^k:F_k\to F_{k-1}$.
\end{remark}


\subsection{Mechanics on algebroids}\label{sec:mechanics on algebroids}
We start this section by  introducing some notation. Let $M$ be a smooth manifold and let $(x^a), \ a=1,\dots,n$, be a coordinate system in $M$. We denote by
$\zt_M \colon \sT M \rightarrow M$ the tangent vector bundle and by $\zp_M \colon \sT^\* M\rightarrow M$ the
cotangent vector bundle. We have the induced (adapted) coordinate systems $(x^a, {\dot x}^b)$ in $\sT M$ and
$(x^a, p_b)$ in $\sT^\* M$.
        Let $\zt\colon E \rightarrow M$ be a vector bundle and let $\zp
\colon E^\* \rightarrow M$ be the dual bundle.
  Let $(e_1,\dots,e_m)$  be a basis of local sections of $\zt\colon
E\rightarrow M$ and let $(e^{1}_*,\dots, e^{m}_*)$ be the dual basis of local sections of $\zp\colon
E^\*\rightarrow M$. We have the induced coordinate systems:
    \beas
    (x^a, y^i),\quad & y^i=\zi(e^{i}_*), \quad \text{in} \ E,\\
    (x^a, \zx_i), \quad &\zx_i = \zi(e_i),\quad \text{in} \ E^\* ,
    \eeas
    where the linear functions  $\zi(e)$ are given by the canonical pairing
    $\zi(e)(v_x)=\la e(x),v_x\ran$. Thus we have local coordinates
    \beas
    (x^a, y^i,{\dot x}^b, {\dot y}^j ) &  \quad \text{in} \ \sT E,\\
    (x^a, \zx_i, {\dot x}^b, {\dot \zx}_j) & \quad \text{in} \ \sT E^\* ,\\
    (x^a, y^i, p_b, \zp_j) & \quad \text{in}\ \sT^\*E,\\
    (x^a, \zx_i, p_b, \zf^j) & \quad \text{in}\ \sT^\* E^\* .
    \eeas

The cotangent bundles $\sT^\*E$ and $\sT^\*E^\*$ are examples of
double vector bundles which are canonically isomorphic with the isomorphism
    \be\label{iso}\cR_E \colon \sT^\*E \longrightarrow \sT^\* E^\*
                    \ee
  being simultaneously an anti-symplectomorphism  (cf. \cite{Kolar:1997,Konieczna:1999,Grabowski:1999}). In local coordinates, $\cR_\zt$ is given by
    $$\cR_E(x^a, y^i, p_b, \zp_j) = (x^a, \zp_i, -p_b,y^j).
                              $$
This means that we can identify coordinates $\zp_j$ with $\zx_j$, coordinates $\zf^j$
with $y^j$, and use the coordinates $(x^a, y^i, p_b, \zx_j)$ in $\sT^\ast E$ and the
coordinates $(x^a, \zx_i, p_b,y^j)$ in $\sT^\ast E^\ast$, in full agreement with
(\ref{iso}).

For the standard concept and theory of {\it Lie algebroids} we refer to the survey
article \cite{Ma0} (see also \cite{Gr,Mac}). It is well known that Lie algebroid
structures on a vector bundle $E$ correspond to linear Poisson tensors on $E^\*$. A
2-contravariant tensor $\zP$ on $E^\*$ is called {\it linear} if the corresponding
mapping $\widetilde{\zP} \colon \sT^\* E^\* \rightarrow \sT E^\*$ induced by
contraction, $\wt{\zP}(\zn)=i_\zn\zP$, is a morphism of double vector bundles. One can
equivalently say that the corresponding bracket of functions is closed on (fiber-wise)
linear functions. The commutative diagram
\be\label{xx}\xymatrix{
\sT^\ast E^\ast\ar[r]^{\widetilde\Pi_\ze}  & \sT E^\ast \\
\sT^\ast E\ar[u]_{\cR_E}\ar[ur]^{\ze} & },
\ee
describes a one-to-one correspondence between linear 2-contravariant tensors $\zP_\ze$ on $E^\*$ and morphisms
$\ze$ (covering the identity on $E^\*$) of the following double vector bundles (cf. \cite{Grabowski:1997,Grabowski:1999,Grabowska:2006,Grabowska:2008}):

\be\xymatrix{
 & \sT^\ast E \ar[rrr]^{\varepsilon} \ar[dr]^{\pi_E}
 \ar[ddl]_{\sT^\ast\tau}
 & & & \sT E^\ast\ar[dr]^{\sT\pi}\ar[ddl]_/-20pt/{\tau_{E^\ast}}
 & \\
 & & E\ar[rrr]^/-20pt/{\zr}\ar[ddl]_/-20pt/{\tau}
 & & & \sT M \ar[ddl]_{\tau_M}\\
 E^\ast\ar[rrr]^/-20pt/{id}\ar[dr]^{\pi}
 & & & E^\ast\ar[dr]^{\pi} & &  \\
 & M\ar[rrr]^{id}& & & M &
}\label{F1.3}
\ee
In local coordinates, every  such $\ze$ is of the form
\be\label{F1.4}
\ze(x^a,y^i,p_b,\zx_j) = (x^a, \zx_i, \zr^b_k(x)y^k, c^k_{ij}(x) y^i\zx_k + \zs^a_j(x) p_a)
\ee
(summation convention is used) and it corresponds to the linear tensor $$ \zP_\ze =c^k_{ij}(x)\zx_k
\partial _{\zx_i}\otimes \partial _{\zx_j} + \zr^b_i(x) \partial _{\zx_i}
\otimes \partial _{x^b} - \zs^a_j(x)\partial _{x^a} \otimes \partial _{\zx_j}.
$$
The morphisms (\ref{F1.3}) of double vector bundles covering the identity on $E^\*$ has been called an {\it
algebroid} in \cite{Grabowski:1999}, while a {\it Lie algebroid} has turned out to be an algebroids for which the tensor
$\zP_\ze$ is a Poisson tensor. If the latter is only skew-symmetric, we deal with a \emph{skew algebroid}.

\medskip
Combining (\ref{F1.3}) with (\ref{xx}) we get the algebroid version of the \emph{Tulczyjew triple} in the form of the diagram:

$$\xymatrix@C-10pt@R-5pt{
&&&& { \mathcal{D}}\ar@{ (->}[d]&&&&\\
 &\sT^\ast E^\ast \ar[rrr]^{\tilde\zP_\ze}
\ar[ddl] \ar[dr]
 &  &  & \sT E^\ast \ar[ddl] \ar[dr]
 &  &  & \sT^\ast E \ar[ddl] \ar[dr]
\ar[lll]_{\varepsilon}
 & \\
 & & E \ar[rrr]^/-20pt/{\zr}\ar[ddl]
 & & & \sT M\ar[ddl]
 & & & E\ar[lll]_/+20pt/{\zr}\ar[ddl]\ar[ddl]_{}\ar[ddl]_{}\ar@/_1pc/[ul]_{\xd L}\ar@/_1pc/[ul]_{\xd L}\ar[dll]_{\lambda_L} \ar[ullll]_{\Lambda_L^\ze}
 \\
 E^\ast\ar[rrr]^/-20pt/{id} \ar[dr]\ar[dr]^{}\ar@/^1pc/[uur]^{\xd H}
 & & & E^\ast\ar[dr]
 & & & E^\ast\ar[dr]\ar[lll]_/-20pt/{id}
 & & \\
 & M\ar[rrr]^{id}
 & & & M & & & M\ar[lll]_{id} &
}$$

\bigskip\noindent
The left-hand side is Hamiltonian with Hamiltonians being functions $H:E^*\to\R$,
the right-hand side is Lagrangian with Lagrangians being functions $L:E\to\R$, and the phase dynamics $\mathcal{D}$ lives in the middle and is understood as an implicit dynamics (\emph{Lagrange equation}) on the phase space $E^\ast$, i.e. a subset of $\sT E^\ast$.
Solutions of the Lagrange equations are `phase trajectories' of the system: $\zb:\R\ra E^\ast$ which satisfy $\st(\zb)(t)\in \mathcal{D}$. Here $\st$ is the canonical tangent prolongation of the curve $\zb$.

The dynamics $\mathcal{D}=\mathcal{D}_L$ generated by the Lagrangian $L$ is simply $\mathcal{D}=\Lambda^\ze_{L}(E)$, where $\Lambda^\ze_{L}: E\longrightarrow
\sT E^\ast$, $\zL^\ze_{L}=\ze\circ\xd L$ is the so called {\em Tulczyjew differential}.

Similarly, the Hamiltonian dynamics is $\mathcal{D}=\mathcal{D}_H=\tilde\zP_\ze(\xd H(E^\ast))$.
In this picture, looking for Hamiltonian description of the Lagrangian dynamics is looking for
a Hamiltonian $H$ such that $\mathcal{D}_H=\mathcal{D}_L$.
This formalisms includes the physically relevant relation between momenta and velocities given by
the \emph{Legendre map} $\lambda_L:E\longrightarrow E^\ast$, $\lambda_L=\tau_{E^\ast}\circ\ze\circ\xd L$, covered by the Tulczyjew differential. The Legendre map actually does not depend on the algebroid structure but only on the Lagrangian.

Note finally that the above formalisms can still be generalized to
include constraints (cf. \cite{Grabowska:2011}) and that a rigorous optimal
control theory on Lie algebroids can be developed as well
\cite{Grabowski:2011}.

\medskip
Since in some cases, instead of the full phase dynamics, people are satisfied with the \emph{Euler-Lagrange equations}, let us define them in our framework. They are first order  equation for curves {\ $\gamma: \R\rightarrow E$} in $E$:
$$(E_L):\qquad \st(\zl_L\circ\gamma)=\zL^\ze_L\circ\gamma.$$
The equation $(E_L)$ simply means that $\zL^\ze_L\circ\zg$ is an admissible curve in $\sT E^\ast$, thus it is the
tangent prolongation of $\zl_L\circ\zg$. In local coordinates, $\mathcal{D}$ has the parametrization by $(x^a,y^k)$
via $\zL^\ze_L$ in the form (cf. (\ref{F1.4}))
\be\zL^\ze_L(x^a,y^i)= \left(x^a,\frac{\partial L}{\partial y^i}(x,y),
\zr^b_k(x)y^k, c^k_{ij}(x) y^i\frac{\partial L}{\partial y^k}(x,y) + \zs^a_j(x)\frac{\partial L}{\partial
x^a}(x,y)\right) \label{F1.4a}\ee and the equation $(E_L)$, for $\zg(t)=(x^a(t),y^i(t))$, reads
\be (E_L):\qquad\frac{\xd x^a}{\xd t}=\zr^a_k(x)y^k, \quad
\frac{\xd}{\xd t}\left(\frac{\partial L}{\partial y^j}\right)= c^k_{ij}(x) y^i\frac{\partial L}{\partial y^k}
+ \zs^a_j(x)\frac{\partial L}{\partial x^a}\,.\label{EL2}\ee
Note that for the canonical Lie algebroid $E=\sT M$ and the adapted local coordinates $(x^a,y^i)=(x^a,\dot x^b)$ we have $\zr^a_b=\zs^a_b=\zd^a_b$ and $c^k_{ij}=0$, so that we recover the classical Euler-Lagrange equations
$$\frac{\xd}{\xd t}\left(\frac{\partial L}{\partial \dot x^i}\right)= \frac{\partial L}{\partial x^i}\,.
$$
The solutions are automatically \emph{admissible curves} in $E$, i.e. curves satisfying $\zr(\zg(t))=\st(\zt\circ\zg)(t)$.
As a curve in the canonical Lie algebroid $E=\sT M$ is admissible if and only if it is a tangent prolongation of
its projection on $M$, first order differential equations for admissible curves (paths) in $\sT M$ may be viewed as certain second order differential equations for curves (paths) in $M$. This explains why, classically, the Euler-Lagrange equations are regarded as second order equations.

\begin{remark}
In this understanding of first order mechanics following Tulczyjew, there is no need for ad-hoc constructions nor Lie algebroid prolongations, Poincar\'e-Cartan forms, Cartan sections, and symplectic algebroids. However, as we shall see, for higher order mechanics on a Lie algebroid there is a fundamental role for Lie algebroid prolongations via the linearisation functor, at least in the formalism we shall develop here. However, our use of Lie algebroid prolongations significantly differs from the r\^ole they play in some approaches to the first order mechanics on Lie algebroids.
\end{remark}

\subsection{Vakonomic constraints}\label{sebsec:affine constraints}
There is an extensive literature concerning vakonomic constraints in the Lagrangian and Hamiltonian formalism. We will skip presenting older concepts and use the elegant geometric approach, based on some ideas of Tulczyjew (see e.g. \cite{Tulczyjew:1977}), as described in \cite{Grabowska:2008}. It is ideologically much simpler than many others and works very well also for mechanics on algebroids. We will devote  a page to present main points of this approach.

\smallskip
Let us recall first that with any
submanifold $S$ in $E$ and any function $L:S\ra \R$ one can associate canonically a
lagrangian submanifold $S_L$ in $\sT^\ast E$ defined by
$$S_L=\{ \za_e\in\sT^\ast_eE: e\in S\text{\ and\ }\la\za_e,v_e\ran=\xd L(v_e)\text{\ for every\ }
v_e\in\sT_eS\}\,.$$ If $S=E$, then $S_L=\xd L(E)$, i.e. $S_L$ reduces to the image of
$\xd L$.
The vakonomically constrained phase dynamics is just $\mathcal{D}=\ze(S_L)\subset\sT E^*$.
With ${\cP L}:S\rRelation\sT^*E$  denote the relation
\be\label{lagrangerelation}
{\cP L}=\{ (e,\za)\in S\ti\sT^* E: \za\in S_L\ \&\ \zp_E(\za)=e\}\,.
\ee
Note that we use the notation ``$\rRelation$'' to reinforce the fact that we are dealing with relations rather than genuine maps. We will adopt this notation for relations throughout the remainder of this paper.\smallskip

We say that a curve $\zg:\R\ra S$ satisfies the \emph{vakonomic Euler-Lagrange equations}
associated with the Lagrangian $L:S\ra\R$ if and only if $\zg$ is $\ze\circ{\cP L}$-related to an admissible curve, i.e. to a curve which is the tangent prolongation of a curve in $E^\ast$ (c.f. \cite{Grabowska:2008}). In particular, the vakonomic Euler-Lagrange equations depend only on the the Lagrangian as a function on the constraint $S$ that differs vakonomic constraints from  nonholonomic of mechanical type ones. As admissible curves in $\sT E^*$ are exactly those whose tangent prolongations lie in the set $\sT^2 E^*\subset\sT\sT E^*$ of holonomic vectors, the vakonomic Euler-Lagrange dynamics on $E$ can be described ny means of the set $\wt{\zL^\ze_L}^{-1}(\sT^2E^*)$, where $\wt{\zL^\ze_L}=\ze\circ{\cP L}$ is the \emph{Tulczyjew differential} relation.

\begin{remark}
There is an alternative point of view \cite{Grabowska:2008} in which the vakonomic Euler-Lagrange equations are not
equations on curves in $S$ but rather on curves in $S_L$. Their projections to $S$ give curves satisfying vakonomic Euler-Lagrange equation in the previous sense. The advantage of this approach is that
the equations are `less implicit' as the `Lagrange multipliers' do not appear.
\end{remark}

It is easy to see that the lagrangian submanifold $S_L$ can be obtained also as the image of $\xd L(S)\subset\sT^*S$ under the symplectic relation $r_S:\sT^*S\rRelation\sT^*E$ associated with the embedding $\zi_S:S\hookrightarrow E$. Denoting the composition of this relation with $\ze$ as $\ze_S$, we get the phase dynamics $\cD$ in the form ${\cD}=\ze_S(\xd L(S))$ that gives the Lagrangian part of our formalism  completely analogous to the unconstrained one. We will sometimes drop the subscript $S$ if
$S$ is fixed and the meaning of $\zi=\zi_S:S\to E$ etc. is clear.

\medskip
In the case when $S=A$ is an affine subbundle of $E$ (assume for simplicity that $A$ is supported on the whole $M$), the relation $\ze_A$ covers the map $\sv(\zi)^*:E^*\to \sv^*(A)$ which is dual to the map $\sv(\zi):\sv(A)\to E$ associated with the affine embedding $\zi:A\to E$. With respect to the second fibration, it covers the restriction $\zr_A:A\to\sT M$ of the anchor map $\zr:E\to\sT M$. We can consider the Hamiltonian part just using Theorem \ref{th:1}. In this way we get the \emph{reduced Tulczyjew triple} for an affine vakonomic constraint:
\be\label{reduced-vaconomic}
\begin{array}[c]{c}
\xymatrix@C-10pt@R-5pt{
&&&& { \mathcal{D}}\ar@{ (->}[d]&&&&\\
 &\sP(A^\dag) \ar@{-|>}[rrr]^{\ze_A\circ\cR_A^{-1}}
\ar[ddl] \ar[dr]
 &  &  & \sT E^\ast \ar[ddl] \ar[dr]
 &  &  & \sT^\ast A \ar[ddl] \ar[dr]
\ar@{-|>}[lll]_{\varepsilon_A}
 & \\
 & & A \ar[rrr]^/-20pt/{\zr_A}\ar[ddl]
 & & & \sT M\ar[ddl]
 & & & A\ar[lll]_/+20pt/{\zr_A}\ar[ddl]\ar[ddl]_{}\ar[ddl]_{}\ar@/_1pc/[ul]_{\xd L}
 \\
 \sv^\ast(A)\ar[dr]\ar[dr]^{}\ar@/^1pc/[uur]^{\uxd H}
 & & & E^\ast\ar[lll]_/+20pt/{\sv(\zi)^*}\ar[rrr]^/+20pt/{\sv(\zi)^*}\ar[dr]
 & & & \sv^\ast(A)\ar[dr]
 & & \\
 & M\ar@{=}[rrr]
 & & & M & & & M\ar@{=}[lll] &
}\end{array}
\ee

\medskip
Here, the Hamiltonian is a section of the AV-bundle $\zz: A^\dag\longrightarrow \sv(A)^\ast$. It is easy to see that in the case when the Lagrangian is hyperregular, i.e. the vertical derivative $\xd^vL$ viewed as the Legendre map
$\zl_L:A\to\sv^*(A)$ is a diffeomorphism, the lagrangian submanifold $\xd L(A)\subset\sT^*A\simeq\sP(A^\dag)$ has the
Hamiltonian description, $\xd L(A)=\uxd H(\sv^*(A))$, with the Hamiltonian section $H:\sv^*(A)\to A^\dag$ such that $H(v^*_x)$ is the unique affine function on $A_x$ with the linear part $v^*_x$ and satisfying
\be\label{Ham}
H(v^*_x)(\zl^{-1}_L(v^*_x))=L(\zl^{-1}_L(v^*_x))\,.
\ee
If the affine bundle $A$ is linear, $A^\dag=A^*\ti\R$ is trivial, so the Hamiltonian (\ref{Ham})
understood as a genuine function on $A^*$ looks like the completely classic one (c.f. Remark \ref{r1}):
$$H(v^*_x)=\la v^*_x,\zl_L^{-1}(v^*_x)\ran-L(\zl_L^{-1}(v^*_x))\,.
$$

\subsection{Tulczyjew triples for higher order mechanics}\label{sec:triples for mechanics}

Consider $k$th order mechanics, where in the ``unreduced approach'' the Lagrangian $L$  is understood as a function on the submanifold $\sT^k Q\subset\sT\sT^{k-1} Q$. The unreduced Tulczyjew triple in this case is
\[
\begin{array}[c]{c}
 \xymatrix@C-30pt@R-5pt{
& \sT^\ast\sT^\ast\sT^{k-1} Q\ar[rrr]^{\tilde\zP_{\ze}}\ar[dl]\ar[ddr]|!{[drr];[dl]}\hole & & & \sT\sT^\ast\sT^{k-1} Q \ar[dl]\ar[ddr]|!{[drr];[dl]}\hole
& & & \sT^\ast\sT\sT^{k-1}  Q\ar[lll]_{\ze} \ar[dl]\ar[ddr]& \\
\sT^\ast\sT^{k-1} Q\ar[ddr] & & & \sT^\ast\sT^{k-1} Q\ar@{=}[lll]\ar@{=}[rrr]\ar[ddr] & & & \sT^\ast\sT^{k-1} Q \ar[ddr]& & \\
& & \sT\sT^{k-1} Q\ar[dl] & & & \sT\sT^{k-1} Q\ar@{=}[lll]|!{[ull];[dl]}\hole  \ar@{=}[rrr]|!{[ur];[drr]}\hole \ar[dl] & & & \sT\sT^{k-1} Q\ar[dl]  \\
& \sT^{k-1} Q & & &\sT^{k-1} Q\ar@{=}[lll]\ar@{=}[rrr] & & & \sT^{k-1} Q &
}
\end{array},
\]
where $\ze$ determines the canonical Lie algebroid structure on $\sT\sT^{k-1}Q$ and $\zP_\ze$ is the Poisson tensor corresponding to the canonical symplectic form $\omega_{\sT^{k-1} Q}$ on $\sT^\ast\sT^{k-1} Q$.

Starting form local coordinates $q=(q^a)$ in $Q$ and $(v)$ in $\sT^{k-1} Q$, where
 $v=(q,\dot q,\ddot q,\dots, \overset{(k-1)}{q})$, we get natural coordinates
$$\begin{array}{ll}\vspace{5pt}
( v,  \zd v) & \text{in }\sT\sT^{k-1} Q ,\\ \vspace{5pt}
(v, p) & \text{in }\sT^\ast\sT^{k-1} Q,\\ \vspace{5pt}
( v, \zd v, \zp, \zd\zp) & \text{in }\sT^\ast\sT\sT^{k-1} Q, \\ \vspace{5pt}
(v, p, \dot v, \dot p) & \text{in }\sT\sT^\ast\sT^{k-1} Q, \\ \vspace{5pt}
(v, p, \varphi,  \psi) & \text{in }\sT^\ast\sT^\ast\sT^{k-1} Q\,,
\end{array}$$
in which $\ze_{\sT^{k-1} Q}$ is just the identification of coordinates $\zd v=\dot v, \zp=\dot p, \zd\zp=p$, and
$\zP_{\sT^{k-1} Q}$ corresponds to the identification $\varphi=\dot v,
\psi=-\dot p$. It will be convenient to write $\overset{(i)}{q}$ as $v_i$, so that the full coordinate system on $\sT^{k-1} Q$ can be written as $(v_i^a)$.

The submanifold $S=\sT^k Q$ in $\sT\sT^{k-1} Q$ is given by the condition ${\zd v}_{i-1}=v_i$, $i=1,\dots,k-1$, with the embedding $\hat\zr_k:\sT^kQ\to\sT\sT^{k-1}Q$, which in this case coincides with the anchor map,
$${\hat\zr_k}\left(q,\dot q,\dots, \overset{(k)}{q}\right)=\left(q,\dot q,\dots, \overset{(k-1)}{q},\dot q,\ddot q,\dots,\overset{(k)}{q}\right)=(v,\zd v)\,.$$
We view $S$ as an affine vakonomic constraint, so the reduced triple is
\be\label{hreduced-classical}
\begin{array}[c]{c}
\xymatrix@C-30pt@R-5pt{
& \sP(\sT^kQ)^\dag\ar@{-|>}[rrr]\ar[dl]\ar[ddr] & & & \sT\sT^\ast\sT^{k-1} Q \ar[dl]\ar[ddr]
& & & \sT^\ast\sT^k Q \ar@{-|>}[lll]\ar[dl]\ar[ddr]& \\
\sT^{k-1} Q\times_Q\sT^\ast Q\ar[ddr]+<-4.5ex,2ex> & & & \sT^\ast\sT^{k-1} Q\ar[lll]\ar[rrr]\ar[ddr] & & & \sT^{k-1} Q\times_Q\sT^\ast Q \ar[ddr]+<-4.5ex,2ex>& & \\
& & \sT^k Q\ar[dl]\ar@{ (->}[rrr] & & & \sT\sT^{k-1} Q\ar[dl] 
& & & \sT^k Q\ar[dl]\ar@{ (->}[lll]  \\
&\sT^{k-1} Q \  & & &\sT^{k-1} Q\ \ar@{=}[lll]\ar@{=}[rrr] & & &\ \sT^{k-1} Q &
}
\end{array}.
\ee
Hence, according to the general scheme, the Lagrangian function $L=L(q,\dots,\overset{(k)}{q})$ generates the following (lagrangian) submanifold in $\sT^\ast\sT\sT^{k-1} Q$:
$$(\sT^k Q)_L=\left\{\,(v, \zd v, \pi, \zd \pi):\quad
{\zd v}_{i-1}=v_i,\;\; {\zp}_i+\zd{\zp}_{i-1}=\frac{\partial L}{\partial \overset{(i)}{q}}\,, i=1,\dots, k-1\,,{\zp}_0=\frac{\partial L}{\partial {q}}\,, \zd{\zp}_{k-1}=\frac{\partial L}{\partial \overset{(k)}{q}}\right\}\,.$$
Here, the conditions ${\zd v}_{i-1}=v_i$, $i=1,\dots,k-1$, mean that $(v,\zd v)$ is in the image of $\hat\zr_k$, and partial derivatives of the Lagrangian are taken in the point $\hat\zr_k^{-1}(v,\zd v)$.
The phase dynamics $\mathcal{D}=\ze((\sT^{k-1} Q)_L)$ is then
$$\mathcal{D}=\left\{\, (v, p, \dot v, \dot p):\quad
\dot v_{i-1}=v_i,\;\; \dot p_i+p_{i-1}=\frac{\partial L}{\partial\overset{(i)}{q}},i=1,\dots, k-1\,,{\dot p}_0=\frac{\partial L}{\partial {q}}\,,p_{k-1}=\frac{\partial L}{\partial \overset{(k)}{q}}\right\}.$$
We understand $\mathcal{D}$ as an implicit first order differential equation for curves in $\sT^\ast\sT^{k-1} Q$. A curve $\zg(t)=(q(t),\dot q(t),\dots,\overset{(k)}{q}(t))$ is $\wt{\zL^\ze_L}$ related with a curve $\zb(t)=(v(t), p(t), \dot v(t), \dot p(t))$ in $\cD$ if and only if
$$v(t)=\left(q(t),\dot q(t),\dots,\overset{(k-1)}{q}(t)\right)\,,\quad\dot v(t)=\left(\dot q(t),\ddot q(t),\dots,\overset{(k)}{q}(t)\right)\,,
$$
and
$$\dot p_i(t)+p_{i-1}(t)=\frac{\partial L}{\partial\overset{(i)}{q}}(\zg(t))\ \text{for}\ i=1,\dots, k-1\,,\quad {\dot p}_0(t)=\frac{\partial L}{\partial {q}}(\zg(t))\,,\quad p_{k-1}(t)=\frac{\partial L}{\partial \overset{(k)}{q}}(\zg(t)).
$$
Assuming additionally that $\zb$ is admissible, we get  equations
\bea\label{hEL}
&\frac{\xd}{\xd t}\overset{(i)}{q}=\overset{(i+1)}{q}\,,\ i=0,\dots,k-1;\\
&p_{k-1}=\frac{\partial L}{\partial \overset{(k)}{q}}(q,\dots,\overset{(k)}{q})\,,
\quad p_{i-1}=\frac{\partial L}{\partial \overset{(i)}{q}}(q,\dots,\overset{(k)}{q})-
\frac{\xd}{\xd t}(p_i)\ \text{for}\ i=1,\dots,k-1;\\
&\frac{\xd}{\xd t}(p_0)=\frac{\partial L}{\partial {q}}(q,\dots,\overset{(k)}{q})\,,
\eea
that can be rewritten as the Euler-Lagrange equations in the traditional form:
\bea\label{higherEL}
&\overset{(i)}{q}=\frac{\xd^i}{\xd t^i}q\,,\ i=1,\dots,k\,,\\
&\frac{\partial L}{\partial {q}}(q,\dots,\overset{(k)}{q})-\frac{\xd}{\xd t}\left(\frac{\partial L}{\partial {\dot q}}(q,\dots,\overset{(k)}{q})\right)+\cdots+(-1)^k\frac{\xd^k}{\xd t^k}\left(\frac{\partial L}{\partial \overset{(k)}{q}}(q,\dots,\overset{(k)}{q})\right)=0\,.
\eea
These equations can be viewed as a system of differential equations of order $k$ on $\sT^kQ$ or, which is the standard point of view, as ordinary differential equation of order $2k$ on $Q$.


\section{Mechanics on graded bundles}\label{sec:MechanicsGradedBundles}
The basic idea is to follow the construction of the Tulczyjew triple for the classical higher order mechanics in which we replace $\sT^kM$ with an arbitrary graded bundle $F_k$, and the canonical embedding $\sT^kM\hookrightarrow\sT\sT^{k-1}M$ with the canonical embedding $F_k\hookrightarrow D(F_k)$. Here the linearisation $D(F_k)$ is equipped with a weighted algebroid structure $\ze:\sT^*D(F_k)\to\sT D(F_k)$. For simplicity and with examples in mind we will work with skew/Lie weighted algebroids but general weighted algebroids can be used as well.\smallskip

We therefore consider mechanics on the algebroid $D(F_k)$ with the affine constraint $F_k\subset D(F_k)$. In consequence, we understand the phase space of a mechanical system on $F_{k}$ to be $D^{*}(F_{k})$  and the (implicit) phase equations as subsets of $\sT D^{*}(F_{k})$.

\smallskip
The corresponding reduced Tulczyjew triple mimics that of (\ref{reduced-vaconomic}), where the affine bundle $A$ is $\zt^k:F_k\to F_{k-1}$, and the vector bundle $E\to M$ is replaced with $D(F_k)\to F_{k-1}$. Note that in this case $\sv^*(A)$ is the Mironian $\textnormal{Mi}(F_{k}) \simeq F_{k-1} \times_{M} \bar{F}_{k}^{*}$. Denoting the affine dual of $\zt^k:F_k\to F_{k-1}$ simply with $F_k^\dag$, we get the triple as follows.

\be\label{graded_Tulczyjew}
\begin{array}[c]{c}
\xymatrix@C-20pt@R-5pt{
&&&& { \mathcal{D}}\ar@{ (->}[d]&&&&\\
 &\sP(F_k^\dag) \ar@{-|>}[rrr]^{\wh{\zP}_{\hat\ze}}
\ar[ddl] \ar[dr]
 &  &  & \sT D^\ast(F_k) \ar[ddl] \ar[dr]
 &  &  & \sT^\ast F_k \ar[ddl] \ar[dr]
\ar@{-|>}[lll]_{\hat\varepsilon}
 & \\
 & & F_k \ar[rrr]^/-20pt/{\hat\zr}\ar[ddl]
 & & & \sT F_{k-1}\ar[ddl]
 & & & F_k\ar[lll]_/+20pt/{\hat\zr}\ar[ddl]\ar[ddl]_{}\ar[ddl]_{}\ar@/_1pc/[ul]_{\xd L}
 \ar[dll]_{\lambda_L} \ar@{-|>}[ullll]_{\Lambda_L^{\hat\ze}}
 \\
 \Mi(F_k)\ar[dr]\ar[dr]^{}\ar@/^1pc/[uur]^{\uxd H}
 & & & D^\ast(F_k)\ar[lll]\ar[rrr]\ar[dr]
 & & & \Mi(F_k)\ar[dr]
 & & \\
 & F_{k-1}\ar@{=}[rrr]
 & & & F_{k-1} & & & F_{k-1}\ar@{=}[lll] &
}\end{array}
\ee

\medskip\noindent
Here, we write $\hat\ze$ instead of $\ze_{F_k}$ and $\hat\zr$ instead of $\zr_{F_k}$. The relation $\wh{\zP}_{\hat\ze}$ can also be written as $\hat\ze\circ\cR_k$, where $\cR_k$ is the canonical isomorphism identifying the vector-affine bundles $\sT^*F_k$ and $\sP(F_k^\dag)$. Now, the generation of the phase dynamics out of a Lagrangian function or a Hamiltonian section and the construction of the Euler-Lagrange equations is subject of the general scheme for affine vakonomic constraints in algebroids, as described in subsection \ref{sebsec:affine constraints}. However, the fact that the algebroid is a weighted algebroid of the graded bundle $F_k$ puts additional flavour to the scheme, making the whole picture similar to the classical case $F_k=\sT^kM$.

\subsection{The Lagrangian formalism on graded bundles}\label{subsec:lagrangian graded bundles}

Let us employ natural homogeneous coordinates $\displaystyle\left( \bar{X}^{\za}_W; \bar{P}_{\zb}^{k+1-W}\right)$ on $\sT^{*}F_{k}$, so that $\bar{X}^{\mu}_{u}$ serve as local coordinates on $F_{k-1}$.  The reason for the ``bar'' in the notation will become clear as we will be dealing with relations. In these local coordinates  the section $dL: F_{k} \rightarrow \sT^{*}F_{k}$  is given by:
$$ \bar{P}_{\za}^{k+1-W} \circ dL = \frac{\partial L}{\partial \bar{X}_{W}^{\za}}\,.
$$

The canonical inclusion $\iota: F_{k} \longhookrightarrow  D(F_{k})$ induces a graded symplectic relation $r: \sT^{*} F_{k}\rRelation \sT^{*} D(F_{k})$. This, in turn, produces the graded relation $\mathcal{P}L := r \circ dL :F_{k} \rRelation \sT^{*} D(F_{k})$.   By employing homogeneous local coordinates $\displaystyle (X^\za_u,Y^I_U,P^{k+1-u}_\zb,\zP^{k+1-U}_J)$, this relation is given by:

\bea\label{lag}X^{\mu}_u &=& \bar{X}^{\mu}_u\,,\quad Y^{I}_{U} = U \: \bar{X}^{I}_{U}\,,\\
 P_{\za}^{W+1} &=& \frac{\partial L}{\partial \bar{X}_{k-W}^{\za}}(\bar X)- (k-W)\: \zP_{\za}^{W+1}\,,
\eea
where we employ the convention that coordinates with degrees outside the range are zero: $P_\zb^1=0$ and $\zP_\zb^{k+1}=0$.

\medskip
 The Lagrangian produces the  \emph{phase dynamics}  $\mathcal{D}_L$ understood as the image of  $F_{k}$ under the relation $\zL^\ze_{L} = \epsilon \circ \mathcal{P}L$:
\begin{equation}
\mathcal{D}_L =  \zL^\ze_{L}(F_{k}) \subset \sT  D^{*}(F_{k}).
\end{equation}
\noindent  The relation $\zL^\ze_{L}$ we will refer to as the \emph{weighted Tulczyjew differential of} $L$. Note that the phase dynamics explicitly depends on the weighted algebroid structure carried by $F_{k}$. In the preceding we will assume that this weighted algebroid structure is skew/Lie to slightly simplify the local expressions, though this specialisation is not fundamentally required in the formalism.  In particular, note  we do not  need the Jacobi identity for the associated bracket structure on sections and so weighted skew algebroids will more that  suffice for this formulation of higher order geometric mechanics.   \smallskip

By using  natural homogeneous local coordinates $\displaystyle (X^{\mu}_u,\zP^{U}_I,\zd X^{\nu}_{u+1},\zd\zP^{U+1}_J)$ in $\sT D^*(F_k)$, the relation $\Lambda_{L}^{\epsilon}: F_{k}  \rRelation \sT D^{*}(F_{k}) $ be described by:
\bea\label{eqn:lagrangian relation}
X^\mu_u&=&\bar X^\mu_u\,,\\
\nonumber \zd X^\nu_{U}&=&(U-u)\zr[u]^\nu_I(\bar X)\bar X^I_{U-u}\,,\\
\nonumber k\zP_i^1&=& \frac{\partial L}{\partial \bar{X}_{k}^{i}}(\bar X) \\
\nonumber \zd \zP^{U+1}_J&=&\zr[u]^{\mu}_J(\bar X)\left(\frac{\partial L}{\partial \bar{X}_{k-U+u}^{\mu}}(\bar X)- (k-U+u)\: \zP_{J}^{U-u+1}\right)\\
&&+U'C[u]^K_{IJ}(\bar X)\bar X^I_{U'}\zP^{U+1-U'-u}_K. \nn
\eea

 We understand the phase dynamics to be the first order implicit differential equation  $\mathcal{D}_L=\zL_L^\ze(F_k) \subset \sT D^{*}(F_{k})$ on the phase space $D^{*}(F_{k})$. A curve $\displaystyle\zb(t)= (X^{\mu}_u(t),\zP^{U}_I(t))\in D^*(F_k)$
 is a solution of this equation if and only if its tangent prolongation
 \be\label{prolongation}(X^\mu_u(t),\zP^{U}_I(t),\dot X^\nu_u(t),\dot\zP^{U}_I(t))\in \sT D^*(F_k)
 \ee
lies in $\mathcal{D}_L$. Here, dots have the meaning of genuine time-derivatives. Therefore, a curve $\zg(t)=( \bar{X}^{\za}_W(t))$ in $F_k$ satisfies the Euler-Lagrange equation if and only if it is $\zL_L^\ze$-related with an admissible curve (\ref{prolongation}), i.e.
\bea\label{admissibility}
 \dot{X}^\nu_{U}&=&(U-u)\zr[u]^\nu_I(\bar X)\bar X^I_{U-u}\,,\\
 k\zP_i^1&=& \frac{\partial L}{\partial \bar{X}_{k}^{i}}(\bar X) \\
 \dot{\zP}^{U+1}_J&=&\zr[u]^{\mu}_J(\bar X)\left(\frac{\partial L}{\partial \bar{X}_{k-U+u}^{\mu}}(\bar X)- (k-U+u)\: \zP_{J}^{U-u+1}\right) \label{zP^k}\\
&&+U'C[u]^K_{IJ}(\bar X)\bar X^I_{U'}\zP^{U+1-U'-u}_K. \nn
\eea
The first equation means that the curve $\gamma: \mathbb{R}\to F_{k}\hookrightarrow D(F_k)$ is  \emph{admissible}, i.e.
\begin{equation*}
\hat{\rho} \circ \gamma = \st {\zg_{k-1}}\,,
\end{equation*}
\noindent where $\hat{\rho} := \rho \circ \iota : F_{k} \rightarrow \sT F_{k-1}$ is the anchor map, ${\gamma}_{k-1} = \tau^{k} \circ \gamma$ is the curve on $F_{k-1}$ underlying $\gamma$, and $\st  {\zg_{k-1}}$ is the tangent prolongation of the said  underlying curve.

The rest of equations defines an implicit differential equation for curves on $F_k$, that is standard for vakonomic equations, on additional parameters $\zP^W_I$. These parameters are fixed if we understand Euler-Lagrange equations as equations on the lagrangian submanifold $(F_k)_L\subset\sT^*D(F_k)$. The latter equations are, in `good' cases, of order $k$ (c.f. (\ref{higherEL})). Indeed, if the matrix $(\zr[0]^\mu_I(\bar X))$ is invertible, we can express
each $\zP_{I}^{U+1}$, $U=1,\dots, k-1$, as a function $\zP_{I}^{U+1}=F_I^U(\dot \zP^{U},\zP^{U},\zP^{U-1},\dots,\zP^1,\bar X)$ of $\dot \zP_{I}^{U}$ and of  variables $\zP^{U'}$ of lower weight and $\bar X$.
As
$$k\zP_i^1= \frac{\partial L}{\partial \bar{X}_{k}^{i}}(\bar X)\,,$$
we get inductively that $\zP_{\za}^{U+1}$ is a function of
$$\frac{\xd^U}{\xd t^U}\left(\frac{\partial L}{\partial \bar{X}_{k}^{a}}(\bar X)\right)$$
and derivatives of $\bar X$ of order $< U$, $U=1,\dots, k-1$.
Since, according to (\ref{zP^k}), $\dot\zP^{k}_\zb$ is a function of variables $\zP$ and $\bar X$,
we get an equation on derivatives of $\bar X$ of order $\le k$. In this situation a curve on $F_k$
has at most one `prolongation' to a corresponding curve in the lagrangian submanifold $(F_k)_L\subset \sT^*D(F_k)$, so  the concepts of Euler-Lagrange equations understood as equations for curves on $F_k$ or $(F_k)_L$ coincide. This is the case of the standard higher order Lagrangian mechanics on $\sT^kM\subset\sT\sT^{k-1}M$.\smallskip

Observe additionally that the admissibility equation (\ref{admissibility}) puts additional relations on variables $\bar X$ which are differential equations of order $k$ in `good' cases. Indeed, we get inductively  the $k$th derivative of $\bar X_0$ as a function of lower order derivatives of $\bar X$. Geometrically it means that the series of anchors $F_{i}\to\sT F_{i-1}$
gives rise to a map $\hat\zr^k:F_k\to \sT^kF_0$ and the $k$th prolongation of the curve $\zg_0=\zt^k_0\circ\zg$ being the projection of the admissible curve $\zg$ on $F_k$ equals $\hat\zr^k\circ\zg$:
\be\label{hadm}
\hat\zr^k\circ\zg=\st^k {\zg_{0}}\,.
\ee
In the canonical case $\sT^kM\subset\sT\sT^{k-1}M$, admissible curves in $\sT^lM$ are just $k$th order prolongations of curves on $M$, so we get equations of order $2k$ on $M$.

\begin{example}\label{e1}
Let $\g$ be a real finite-dimensional Lie algebra with the structure constants $c_{ij}^k$ relative to a chosen basis $e_i$, and put $F_2=\g_2=\g[1]\ti\g[2]$ (c.f. Example \ref{e-lalg}). Note that this graded bundle is actually a \emph{graded space} and its linearisation, carrying a canonical Lie algebroid structure, is an example of a \emph{weighted Lie algebra} in the terminology of (see \cite{Bruce:2014}), as there are no coordinates of weight 0. The basis induces coordinates $(x^i,z^j)$ on $\g_2$ and coordinates
$(x^i,y^j,z^k)$ on $D(\g_2)=\g[1]\ti\sT\g[1]=\g[1]\ti\g[1]\ti\g[2]$ for which the embedding $\zi:\g_2\hookrightarrow D(\g_2)$ takes the form $\zi(x,z)=(x,x,z)$ and the vector bundle projection is $\zt(x,y,z)=x$.
The Lie algebroid structure on $D(\g_2)$ is the product of the Lie algebra structure on $\g$ and the tangent bundle $\sT\g$; the map $\ze:\sT^*D(\g_2)\to\sT D^*(\g_2)$ takes the form
\be
(x,y,z,\za,\zb,\zg)\mapsto(x,\zb,\zg,z,\ad_y^*\zb,\za)\,,
\ee
where $(\ad_y^*\zb)_j=c^k_{ij}y^i\zb_k$ \cite{Bruce:2014}.
Given a Lagrangian $L:\g_2\to\R$, the Tulczyjew differential relation $\zL_L^\ze:\g_2\to\sT D^*(\g_2)$ is
$$\zL_L^\ze(x,z)=\left\{\left(x,\zb,\frac{\pa L}{\pa z}(x,z),z,\ad_x^*\zb,\za\right): \za+\zb=\frac{\pa L}{\pa x}(x,z)\right\}\,.
$$
A curve in $\mathcal{D}_L=\zL_L^\ze(\g_2)$ is admissible if and only if (dots are now time derivatives)
\beas
\dot x&=&z\,,\\
\dot \zb&=&\ad_x^*\zb\,,\\
\za&=&\frac{\xd}{\xd t}\left(\frac{\pa L}{\pa z}(x,z)\right)\,.
\eeas
Hence,
$$\zb=\frac{\pa L}{\pa x}(x,z)-\frac{\xd}{\xd t}\left(\frac{\pa L}{\pa z}(x,z)\right)$$
that leads to the Euler-Lagrange equations on $\g_2$:
\beas
\dot x&=&z\,,\\
\frac{\xd}{\xd t}\left(\frac{\pa L}{\pa x}(x,z)-\frac{\xd}{\xd t}\left(\frac{\pa L}{\pa z}(x,z)\right)\right)&=&\ad_x^*\left(\frac{\pa L}{\pa x}(x,z)-\frac{\xd}{\xd t}\left(\frac{\pa L}{\pa z}(x,z)\right)\right)\,.
\eeas
These equations are second order and induce the Euler-Lagrange equations on $\g$ which are of order 3:
$$\frac{\xd}{\xd t}\left(\frac{\pa L}{\pa x}(x,\dot x)-\frac{\xd}{\xd t}\left(\frac{\pa L}{\pa z}(x,\dot x)\right)\right)=\ad_x^*\left(\frac{\pa L}{\pa x}(x,\dot x)-\frac{\xd}{\xd t}\left(\frac{\pa L}{\pa z}(x,\dot x)\right)\right)\,.
$$
For instance, consider the `free' Lagrangian $L(x,z)=\half\sum_iI_i(z^i)^2$ induces on $\g_2$ the equations (we do not use the summation convention loosing control on indices)
\beas
\dot x&=&z\,,\\
I_j\ddot  z^j&=&\sum_{i,k}c^k_{ij}I_kx^iz^k\,,
\eeas
which are equivalent to the equations
$$I_j\dddot x^j=\sum_{i,k}c^k_{ij}I_kx^i\dot x^k$$
on $\g$.
The latter can be viewed as `higher Euler equations' (for a rigid body if $\g=\so$).
\end{example}

\subsection{The Hamiltonian formalism on graded bundles}\label{sec:Hamiltonian graded bundles}

 For the system associated with a Hamiltonian section $H:\Mi(F_k)\rightarrow  F_k^\dag$ we obtain the phase dynamics $\cD_H$ understood as the image of $\Mi(F_k)$ under the relation $\Lambda_H^{\hat\ze}=\tilde\zP_{\hat\ze}\circ\cP H$. The map
$\tilde\zP_{\hat\ze}:\sT^\ast D^\ast(F_k)\rightarrow \sT D^\ast(F_k)$ encodes the structure of weighted algebroid on $F_k$ while
$\cP H$ is the composition of $\uxd H: \Mi(F_k)\rightarrow \sP F_k^\dag$ with symplectic relation $s:\sP F_k^\dag\rRelation \sT^\ast D^\ast(F_k)$. The relation $s$ is the Hamiltonian counterpart of the relation $r:\sT^\ast F_k\rRelation\sT^\ast D(F_k)$ that we used in Lagrangian mechanics and which is just the phase lift of the inclusion $F_k\hookrightarrow D(F_k)$. The relation $s$ can be obtained as the composition of $r$ with two isomorphisms $\cR_k:\sT^\ast F_k\longrightarrow \sP F_k^\dag$ and
$\cR_{D(F_k)}: \sT^\ast D(F_k)\longrightarrow \sT^\ast D^\ast(F_k)$, more precisely
$$s=\cR_{D(F_k)}\circ r\circ\cR_k^{-1}.$$
The relation $s$ can be however obtained independently of $r$. First notice that the inclusion $F_k\hookrightarrow D(F_k)$ is affine,
i.e. the image of a fibre of $\zt^k:F_k\to F_{k-1}$ is an affine subspace of the appropriate fibre of the vector bundle $D(F_k)\rightarrow F_{k-1}$. It means that the vector hull $\hat F_k$ is a vector subbundle of $D(F_k)$. The vector dual of $\hat F_k\hookrightarrow D(F_k)$ is the projection $D^\ast(F_k)\rightarrow F_k^\dag.$ The phase lift of that projection is one of the components of the relation $s$. To get $s$ we have to compose it with the symplectic reduction $\sT^\ast F_k^\dag\rRelation \sP F_k^\dag$. Of course, one should check if the phase lift of projection and reduction are composable. This can be done in coordinates.

For the coordinate expression of $s$ we will use coordinates:
\begin{align*}
& (X^\mu_u, \Theta^1_i, \zG^{k+1}_A, \zG^{k+1-w}_a, X^i_k) &\text{on } &\sP F_k^\dag, & \\
& (X^\mu_u, \zP^{k+1-w}_a, \zP^1_i, P^{k+1}_A, P^{k+1-w}_a, Y^a_w, Y^i_k) &\text{on } &\sT^\ast D^\ast(F_k).&
\end{align*}
The relation $s$ is described by the conditions
$$\Theta^1_i=k\zP^1_i,\quad P^{k+1}_A=\zG^{k+1}_A,\quad Y_w^a=wX^a_w,\quad Y^i_k=kX^i_k,\quad P^{k+1-w}_a=\zG^{k+1-w}_a+w\zP^{k+1-w}_a,$$
so $\cP H:\Mi(F_k)\rRelation \sT^\ast D^\ast(F_k)$ reads
$$\zP^1_i=\frac{1}{k} \Theta^1_i, \quad P^{k+1}_A=\frac{\partial H}{\partial X^A_0}, \qquad P^{k+1-w}_a=\frac{\partial H}{\partial X^{a}_w}
+w\zP^{k+1-w}_a,\quad Y^a_w=wX^a_w,\quad Y^i_k=k\frac{\partial H}{\partial \Theta^1_i}.$$
Finally, the coordinate expression for $\Lambda_H^{\hat\ze}$ is
\begin{align}
\delta X^\mu_U\circ \tilde\Lambda_\ze& =\rho[U-k]^\mu_i(X)k\frac{\partial H}{\partial \Theta^1_i} + \rho[U-w]^\mu_{a}(X) X^a_w \\
\delta \zP^{U+1}_J\circ\tilde\Lambda_\ze& =-\rho[U-k]^A_J\frac{\partial H}{\partial X^A_0}-\rho[U-k+w]^a_J\left(\frac{\partial H}{\partial X^a_w}+w\zP^{k+1-w}_a\right)\\
& \quad+ C[u]^K_{iJ}(X)k\frac{\partial H}{\partial \Theta^1_i}\zP^{U+1-k-u}+C[u]^K_{a J}(X)wX^a_w\zP^{U+1-w-u}.\notag
\end{align}

Usually we are given a system with Lagrangian function $L$ defined.  The question wether there exists the corresponding Hamiltonian section such that ${\cD}_H={\cD}_L$ arises naturally. The answer is very much the same as in the classical case for first order mechanics. It exists if Lagrangian is hyperregular i.e. the Legendre map $\lambda_L: F_k\rightarrow \Mi(F_k)$, $\lambda_L=\sT^\ast\zt^k\circ\xd L$, is a diffeomorphism.
In case it is not, there exists a generating family of sections parameterized by elements of $F_k$. Using the correspondence between sections of $F_k^\dag\rightarrow \Mi(F_k)$ and functions on $F_k^\dag$ we can write the generating family of functions as
\be h:F_k^\dag\times_{F_{k-1}} F_k\ni (\varphi, f)\longmapsto \varphi(f)-L(f)\in\R.\ee

\begin{example}\label{e-ham}
Let us consider the Hamiltonian side of the Lagrangian mechanics presented in Example \ref{e1}, i.e. for $F_2=\g_2=\{(x,z)\}\subset \g\ti\sT\g=\{(x,y,z)\}$ and the `free' Lagrangian $L(x,z)=\half\sum_iI_i(z^i)^2$ on $\g_2$. Since in this case the bundle $\zt_2:F_2=\g_2\to F_1=\g[1]$ is canonically linear with fibers $\g[2]$, the AV-bundle $\zz:\g_2^\dag\to\sv^*(\g_2)=\g[1]\ti\g^*[1]$ is trivial, $\g_2^\dag=\g[1]\ti\g^*[1]\ti\R$, and Hamiltonians can be understood as genuine functions on $\g[1]\ti\g^*[1]$. Note that, due to our convention, a function $H:\g[1]\ti\g^*[1]\to\R$ corresponds to the section $(x,\zvy,-H(x,\zvy))$ of $\g_2^\dag$. Our Lagrangian is hyperregular,
$$\zl_L:\g_2\to\sv^*(\g_2)\,,\ \zl_L(x,z)=(x,I\dot z)\,,
$$
where $(I\dot z)_i=I_iz^i$ (no summation convention here) is a diffeomorphism, so, according to (\ref{Ham}),
$$H(x,\zvy)=\la \zvy,\frac{1}{I}\dot\zvy\ran - L(\zl_L^{-1}(x,\zvy))=\sum_i\frac{\zvy_i^2}{I_i}
-\half\sum_iI_i\left(\frac{\zvy_i}{I_i}\right)^2=\half\sum_i\frac{\zvy_i^2}{I_i}\,.
$$
\end{example}


\section{Reductions}\label{sec:Higher Lagrangian Algebroid}
As remarked on in the introduction, higher order mechanics on Lie groups and Lie groupoids has received very little attention in the literature. The need to understand  higher order mechanics  on a Lie algebroid  naturally appears  in the context of reductions of higher order theories  on Lie groupoids with
Lagrangians that are invariant with respect to the groupoid multiplication.  However, studying higher order Lagrangian mechanics on Lie algebroids should be considered an interesting problem irrespective of any reduction. Indeed, via Theorem \ref{thrm:almost Lie} the results presented in the section will generalise quite directly to non-integrable Lie algebroids and almost Lie algebroids. It appears that we cannot directly generalise these constructions to skew algebroids, that is we cannot lose the compatibility of the anchor with the brackets. This compatibility of the anchor and the brackets also features as an essential ingredient in the variational approach developed by J\'{o}\'{z}wikowski \& Rotkiewicz \cite{Jozwikowski:2014}. We will present  in some detail the constructions for higher order Lagrangian mechanics on Lie algebroids as we expect this to be a particularly rich source of concrete examples and applications of our formalism.  Moreover, this situation leads to a `good' example and so we can derive Euler--Lagrange equations explicitly.

\subsection{Higher order Lagrangian mechanics on a Lie algebroid}
Let us consider a Lie groupoid $\mathcal{G}$ and a Lagrangian systems on $\textnormal{A}^{k}(\mathcal{G}) =  \left.\sT^{k}\mathcal{G}^{\:\underline{s}}\right|_{M}$.   We will refer to such systems as a \emph{k-th order Lagrangian system on the Lie algebroid} $\textnormal{A}(\mathcal{G})$ as the structure is completely defined by the underlying genuine  Lie algebroid structure on  $\textnormal{A}(\mathcal{G})$, see example \ref{eg:Lie algebroid epsilon}.  The relevant diagram here is

\begin{diagram}[htriangleheight=30pt ]
 \sT D^{*}(\textnormal{A}^{k}(\mathcal{G}))& \lTo^{\epsilon}& \sT^{*} D(\textnormal{A}^{k}(\mathcal{G}))  & \lRelation^{r} & \sT^{*}\textnormal{A}^{k}(\mathcal{G}) &   \\
             \dTo              &           &              \dTo                   &                 & \dTo     \uDotsto_{dL}  &     \\
 \sT\textnormal{A}(\mathcal{G})           &  \lTo^{\rho}    &            D(\textnormal{A}^{k}(\mathcal{G})   )          &  \lInto^{\iota}   & \textnormal{A}^{k}(\mathcal{G}) \\
\end{diagram}

\noindent  Here,
\begin{equation*}
D(\textnormal{A}^{k}(\mathcal{G}) ) \simeq \{ (Y,Z) \in \textnormal{A}(\mathcal{G}) \times_{M} \sT\textnormal{A}^{k-1}(\mathcal{G})| \bar{\rho}(Y) = \sT \bar{\tau}(Z)  \},
\end{equation*}
where $\bar{\rho} : \textnormal{A}(\mathcal{G}) \rightarrow \sT M$ is the standard anchor of the Lie algebroid and $\bar{\tau}: \textnormal{A}^{k-1}(\mathcal{G}) \rightarrow M$ is the obvious projection.\smallskip

Following example \ref{eg:Lie algebroid epsilon}, let us employ local coordinates $\left(X_{u}^{\mu} , Y_{U}^{I} , P_{\nu}^{k-1 -u} , \Pi_{J}^{k-1-U} \right)$ on $\sT^{*}D(\textnormal{A}^{k}(\mathcal{G}))$ and similarly let us employ local coordinates $\left( X_{u}^{\mu} , \Pi_{I}^{k-1-U}, \delta X_{U}^{\nu},  \delta \Pi_{J}^{U+1}\right)$ on $\sT D^{*}(\textnormal{A}^{k}(\mathcal{G}))$. Using \ref{eqn:lagrangian relation}, the relation $\Lambda_{L}^{\epsilon}: \textnormal{A}^{k}(\mathcal{G})  \rRelation \sT D^{*}(\textnormal{A}^{k}(\mathcal{G}))$ is given by
\begin{eqnarray}
\nonumber X_{u}^{\mu} &=& \bar{X}_{u}^{\mu},\\
\nonumber k \Pi_{i}^{1} &=& \frac{\partial L}{\partial \bar{X}^{i}_{k}}(\bar{x}),\\
\nonumber \delta X_{U}^{\nu} &=& U \: \rho[0]_{\nu}^{J} (\bar{x}) \bar{X}^{J}_{U},\\
\nonumber \delta \Pi_{I}^{U+1} &=& \rho[0]_{I}^{\mu}(\bar{x}) \left(\frac{\partial L}{\partial \bar{X}^{\mu}_{k-U}} - (k-U)\Pi_{J}^{U+1}  \right) + \delta_{U}^{k} \: C[0]_{J I }^{K }(\bar{x}) \bar{X}_{1}^{J}\Pi_{K}^{U}.
\end{eqnarray}
Recall that the only non-zero components of $\rho[0]_{I}^{\mu}$ are $\rho^{A}_{b}$ and $\delta_{b}^{a}$, while $C[0]_{I J }^{J}$ has only $C_{ab}^{c}$ non-vanishing and that $(\rho^{A}_{b}, C_{ab}^{c})$ are the structure functions of the Lie algebroid $\textnormal{A}(\mathcal{G})$. \smallskip

A curve $\zg(t)=( \bar{X}^{\za}_W(t))$ in $\textnormal{A}^{k}(\mathcal{G})$ satisfies the Euler-Lagrange equation if and only if it is $\zL_L^\ze$-related with an admissible curve (\ref{prolongation}). That is,
\begin{eqnarray}
\nonumber \frac{d}{dt}X_{U-1}^{\mu} &=& U\: \rho[0]_{J}^{\mu} (\bar{x}) \bar{X}^{J}_{U},\\
\nonumber k\: \Pi_{i}^{1} & =& \frac{\partial L}{\partial \bar{X}^{i}_{k}}(\bar{X})\\
\nonumber \frac{d}{dt} \Pi_{J}^{U} &=& \rho[0]_{J}^{\mu}(\bar{x}) \left(\frac{\partial L}{\partial \bar{X}^{\mu}_{k-U}} - (k-U)\Pi_{J}^{U+1}  \right) + \delta_{U}^{k} \: C[0]_{IJ}^{K }(\bar{x}) \bar{X}_{1}^{I}\Pi_{K}^{U}.
\end{eqnarray}
Note that we have a `good' case here, meaning that the  components of the anchor  $\rho_{a}^{b}$ are invertible, in this case trivially. For explicitness, let us revert back to coordinates $(\bar{x}^{A}, \bar{y}_{w}^{a} , \bar{z}^{b}_{k})$ on $\textnormal{A}^{k}(\mathcal{G})$. Recursively we can write the momenta $\pi_{a}$ in terms of the coordinates on  $\textnormal{A}^{k}(\mathcal{G})$ as follows; \smallskip

 \begin{tabular}{l}
                                                               $\displaystyle \pi_{a}^{1} = \frac{1}{k} \frac{\partial L}{\partial \bar{z}^{a}_{k}}$,\\
                                                               $\displaystyle (k-1)\pi_{b}^{2} = \frac{\partial L}{\partial \bar{y}^{b}_{k-1}} - \frac{1}{k}\frac{d}{dt}\left(  \frac{\partial L}{\partial \bar{z}^{b}_{k}} \right)$,\\
                                                               $\displaystyle (k-2)\pi_{c}^{3} =  \frac{\partial L}{\partial \bar{y}^{c}_{k-2}} - \frac{1}{(k-1)}\frac{d}{dt}\left(  \frac{\partial L}{\partial \bar{y}^{c}_{k-1}} \right) +\frac{1}{k(k-1)}\frac{d^{2}}{dt^{2}}\left(  \frac{\partial L}{\partial \bar{z}^{c}_{k}} \right) $\\
                                                               $\displaystyle \vdots $\\
                                                               $ \displaystyle\pi_{d}^{k} = \frac{\partial L}{\partial \bar{y}_{1}^{d}} - \frac{1}{2!}\frac{d}{dt}\left(\frac{\partial L}{\partial \bar{y}_{2}^{d}}\right) + \frac{1}{3!}\frac{d^{2}}{dt^{2}}\left(\frac{\partial L}{\partial \bar{y}_{3}^{d}}\right)- \cdots $\\
                                                               $\displaystyle + (-1)^{k} \frac{1}{(k-1)!} \frac{d^{k-2}}{dt^{k-2}}\left(\frac{\partial L}{\partial \bar{y}_{k-1}^{d}}\right) - (-1)^{k} \frac{1}{k!}\frac{d^{k-1}}{dt^{k-1}}\left(\frac{\partial L}{\partial \bar{z}_{k}^{d}}\right)$,
                                                              \end{tabular}\medskip

 \noindent which we recognise as the \emph{Jacobi--Ostrogradski momenta}. \smallskip

 The remaining equation  $\displaystyle \frac{d}{dt} \pi^{k}_{a} = \rho_{a}^{A}(\bar{x}) \frac{\partial L}{\partial \bar{x}^{A}} + \bar{y}_{1}^{b}C_{ba}^{c}(\bar{x})\pi_{c}^{k}$ can then  be written as
     \begin{equation*}
      \rho_{a}^{A}(\bar{x}) \frac{\partial L}{\partial \bar{x}^{A}} - \left(\delta^{c}_{a} \frac{d}{dt} - \bar{y}_{1}^{b}C_{ba}^{c}(\bar{x}) \right)\left(  \frac{\partial L}{\partial \bar{y}_{1}^{c}} - \frac{1}{2!}\frac{d}{dt}\left(\frac{\partial L}{\partial \bar{y}_{2}^{c}}\right) + \frac{1}{3!}\frac{d^{2}}{dt^{2}}\left(\frac{\partial L}{\partial \bar{y}_{3}^{c}}\right) \cdots - (-1)^{k}\frac{1}{k!}\frac{d^{k-1}}{dt^{k-1}}\left(\frac{\partial L}{\partial \bar{z}_{k}^{c}}\right)\right)=0,
     \end{equation*}

    \noindent which we define to be the \emph{k-th order Euler--Lagrange equations} on $\textnormal{A}(\mathcal{G})$, taking into account our choice of homogeneous coordinates.

\medskip
\noindent The above Euler-Lagrange equations are in complete agrement  with J\'{o}\'{z}wikowski \&   Rotkiewicz \cite{Jozwikowski:2014}, Colombo \& de Diego \cite{Colombo:2013} for the second order case, as well as Gay-Balmaz, Holm,  Meier,  Ratiu \&  Vialard's derivation of the higher Euler--Poincar\'{e} equations on a Lie group \cite{Gay-Balmaz:2012}. We clearly recover the standard higher Euler--Lagrange equations on $\sT^{k}M$ as a particular example. Note that the geometric structure on $\textnormal{A}^{k}(\mathcal{G})$ is completely encoded in the Lie algebroid $\textnormal{A}(\mathcal{G})$ and so the nomenclature we have chosen is appropriate. If we restrict ourselves to Lagrangians that do not depend on higher order (generalised) velocities then we recover the standard Euler--Lagrange equations on a a Lie algebroid, as first derived by Weinstein \cite{Weinstein:1996}  and generalized to (skew) algebroids in \cite{Grabowska:2006,Grabowska:2008}.

\subsection{Second order Hamel and Lagrange--Poincar\'{e}  equations}
In this subsection we briefly examine second order Lagrangian mechanics on the Atiyah algebroid and an Atiyah algebroid in the presence of a non-trivial connection. These examples give rise to generalisations of the Hamel and Lagrange--Poincar\'{e} equations.

\begin{example}\label{exm:Hamel}
An important situation in physics is when a Lagrangian defined on a principal $G$-bundle is invariant under the action of $G$. In this case if $P \rightarrow M$ is the principal bundle in question, then the Lagrangian is a function on $\sT^{k}P$. The reduction of this system is a Lagrangian system on $\textnormal{A}^{k} := \sT^{k}P\slash G$, which is to be considered as the  weighted (or higher order) version of the \emph{Atiyah algebroid}. Let us be a little more specific and consider $k=2$. Via a local trivialisation we can identify $\textnormal{A}^{2} \approx \sT^{2}M\times \mathfrak{g}[1] \times \mathfrak{g}[2]$ locally and thus  employ homogeneous coordinates $\{\bar{x}^{A}, \bar{v}_{1}^{B}, \bar{y}^{a}_{1}, \bar{w}_{2}^{A}, \bar{z}^{a}_{2} \}$. Then, in hopefully clear notation, the phase dynamics of a second order Lagrangian on an Atiyah algebroid is specified by

\renewcommand{\arraystretch}{2}
\begin{tabular}{lll}
$\displaystyle \delta x_{1}^{A} = \bar{v}_{1}^{A}$,  &  $\displaystyle \delta v_{2}^{A} = 2 \bar{w}_{2}^{A}$, & $\displaystyle \delta y_{2}^{a} = 2 \bar{z}^{a}_{2}$,\\
$\displaystyle \delta \pi_{A}^{3} = \frac{\partial L}{\partial \bar{x}^{A}}$, & $\displaystyle \delta \pi_{a}^{3} = \bar{y}^{c}_{1}C_{ca}^{b}\pi_{b}^{2}$, & $\displaystyle \delta \pi^{2}_{A} = \frac{\partial L}{\partial \bar{v}^{A}_{1}} - \pi_{A}^{2}$, \\
$\displaystyle \delta \pi_{a}^{2} = \frac{\partial L }{\partial \bar{y}_{1}^{a}} - \pi_{a}^{2} $, & $\displaystyle \pi_{A}^{1} = \frac{1}{2}\frac{\partial L}{\partial \bar{w}^{A}_{2}}$, & $\displaystyle \pi_{a}^{1} = \frac{1}{2} \frac{\partial L}{\partial \bar{z}^{a}_{2}}$.
\end{tabular}

\medskip
\noindent Via inspect we see that the phase dynamics is essentially separated into a part to do with the base $M$ and a part to do with the Lie group $G$. Thus, using the general result on mechanics on a Lie algebroid and  higher order tangent bundles as presented above, we arrive at the \emph{second order Hamel equations} \cite{Jozwikowski:2014} (also see \cite{Cendra:1998} for the first order case);

\begin{eqnarray}
\frac{\partial L}{\partial \bar{x}^{A}} - \frac{d}{dt}\left(\frac{\partial L}{\partial \bar{v}_{1}^{A}}  \right) + \frac{1}{2!} \frac{d^{2}}{dt^{2}}\left(\frac{\partial L}{\partial \bar{w}_{2}^{A}}  \right)  &=&0,\\
\nonumber  \left( \delta^{c}_{a} \frac{d}{dt} - \bar{y}^{b}_{1}C_{ba}^{c}\right)\left(\frac{\partial L}{\partial \bar{y}^{c}_{1}} - \frac{1}{2!} \frac{d}{dt}\left(\frac{\partial L}{\partial \bar{z}^{c}_{2}} \right) \right)&=&0.
\end{eqnarray}
\noindent  For the case of $\textnormal{A}^{2} = \sT^{2}G\slash G$ the above equations reduce to just the second equation which is the second order \emph{Euler--Poincar\'{e} equation}. The k-th order case follows directly.
\end{example}
\medskip

\begin{proposition}
Let $P \rightarrow M$ be a principal $G$-bundle, such that the Lie algebra $\mathfrak{g}$ of $G$ is abelian. Then given a higher order Lagrangian on the Atyiah algebroid $\textnormal{A}:= \sT P \slash G$ the momentum
\begin{equation*}
\pi_{a}^{k} = \frac{\partial L}{\partial \bar{y}_{1}^{d}} - \frac{1}{2!}\frac{d}{dt}\left(\frac{\partial L}{\partial \bar{y}_{2}^{d}}\right) + \cdots + (-1)^{k} \frac{1}{(k-1)!} \frac{d^{k-2}}{dt^{k-2}}\left(\frac{\partial L}{\partial \bar{y}_{k-1}^{d}}\right) - (-1)^{k} \frac{1}{k!}\frac{d^{k-1}}{dt^{k-1}}\left(\frac{\partial L}{\partial \bar{z}_{k}^{d}}\right),
\end{equation*}
\noindent is a constant of motion.
\end{proposition}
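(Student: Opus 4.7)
The plan is to specialise the $k$-th order Euler--Lagrange equation on a Lie algebroid derived in the preceding subsection, namely
\[
\frac{d}{dt}\pi_a^k = \rho_a^A(\bar{x})\frac{\partial L}{\partial \bar{x}^A} + \bar{y}_1^b C_{ba}^c(\bar{x})\pi_c^k,
\]
to the case of the Atiyah algebroid $\textnormal{A} = \sT P / G$, and then verify that its right-hand side vanishes identically when $a$ labels a basis of the adjoint subbundle $\mathfrak{g}_P \subset \textnormal{A}$. All remaining input is then extracted from two well-known features of the Atiyah algebroid of a principal bundle with abelian structure group.

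Concretely, I would work in a local trivialisation $P|_U \simeq U \times G$ and use the induced frame $\{e_A, e_a\}$ for $\textnormal{A}$, with $e_A = \partial_{x^A}$ along the base factor and $e_a = X_a^L$ the left-invariant vector field on $G$ associated to a basis $\{X_a\}$ of $\mathfrak{g}$. Because $e_a$ is vertical on $P$, its image under the anchor $\bar{\rho}: \textnormal{A}\to\sT M$ is zero, so $\rho_a^A = 0$ and the first term of the Euler--Lagrange equation vanishes. For the second term, the structure constants in this frame are read off from the brackets of vector fields on $U\times G$: $[e_a, e_b] = [X_a, X_b]^L$ vanishes by the abelian hypothesis on $\mathfrak{g}$, while $[e_A, e_a]$ vanishes because $e_A$ depends only on base coordinates and $e_a$ only on group coordinates in the chosen trivialisation. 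Thus $C_{ba}^c = 0$ whenever $a$ is vertical (for any $b$, $c$), and the second term vanishes as well, giving $\dot\pi_a^k = 0$.

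The one point requiring a little attention is making sure that the coordinate-defined Jacobi--Ostrogradski momentum $\pi_a^k$ is well-posed globally, so that the conservation law is a statement about a genuine object on $M$ rather than just on the trivialising patch. For abelian $\mathfrak{g}$ the adjoint action of $G$ on $\mathfrak{g}$ is trivial, so $\mathfrak{g}_P = P\times_G \mathfrak{g}$ is canonically trivial and the frame $\{e_a\}$ is globally defined; the local computation therefore patches unambiguously and gives a global conservation law, without the need to introduce any additional structure (such as a connection) on $P$.
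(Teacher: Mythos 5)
Your argument is correct and is essentially the paper's own proof: the authors dispose of the proposition in one line, ``follows directly from the $k$-th order Hamel equations upon setting $C_{ab}^{c}=0$,'' and the Hamel equations are precisely the specialisation of the algebroid Euler--Lagrange equation you start from (with $\rho_a^A=0$ on the vertical directions and all brackets vanishing in the trivialisation), so you are carrying out the same computation, only more explicitly. Your closing remark on global well-definedness goes beyond what the paper addresses (it works entirely in local trivialisations); note only that triviality of the adjoint action of $G$ on $\mathfrak{g}$ requires $G$ connected, not merely $\mathfrak{g}$ abelian, so that patching claim needs that extra hypothesis.
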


\begin{proof}
Follows directly from the k-th order Hamel equations upon setting $C_{ab}^{c} =0$.
\end{proof}

\begin{example}
For instance, let $L$ be the Lagrangian governing the motion of the tip of a javelin 
defined on $\sT^2\R^3$  by
$$
L(x,y,z)=\frac12\left(\sum_{i=1}^3(y^i)^2-(z^i)^2\right) \,.
$$
We can understand $G=\R^3$ here as a commutative Lie group, and since $L$ is $G$-invariant, we get immediately the reduction to the graded bundle $\R^3[1]\ti\R^3[2]$. The Euler-Lagrange equations
$$\frac{\xd}{\xd t}\left(\frac{\partial L}{\partial y^i} - \frac{1}{2}\frac{\xd}{\xd t}\left(\frac{\partial L}{\partial z^i}\right)\right)=0
$$
give in this case
$$\frac{\xd y^i}{\xd t}=\frac{1}{2}\frac{\xd^2 z^i}{\xd t^2}\,.
$$
The Lagrangian is regular and we get, similarly as in Example \,\ref{e-ham},
$\zl_L(y,z)=(y,-z)$, and
the Hamiltonian as a function on $\R^3[1]\ti\R^3[1]$ with coordinates $(y,\zvy)$:
$$H(y,\zvy)=\la\zvy,-\zvy\ran-L(y,-\zvy)=-\frac12\left(\sum_{i=1}^3(y^i)^2+(\zvy^i)^2\right)\,.
$$
Note that we do not have the minus sign if we wiew the Hamiltonian as a section of the corresponding AV-bundle.
\end{example}

\medskip
We have derived the Hamel equations using a geometric reduction and not a reduction of the variational problem.  In the context of the variational problem the  Hamel equations arise as a special case of the Lagrange--Poncar\'{e} equations in the presence of a trivial connection on $P \rightarrow P\slash G$ i.e. a trivial background Yang--Mills field.  To take a non-trivial background into account we have to \emph{deform}  the weighted Atyiah algebroid using the Yang--Mills field following \cite{Grabowska:2006}. Details for the second order case are presented in the proceeding example.

\begin{example}
Following example \ref{exm:Hamel}, but now in the presence of a non-trivial background Yang--Mills field, the weighted Atyiah algebroid of degree two becomes deformed by the connection and the associated curvature. This deformation only effects the $\delta \pi$ coordinates and leads  to a modified phase dynamics given by \medskip

\renewcommand{\arraystretch}{2}
\begin{tabular}{lll}
$\displaystyle \delta x_{1}^{A} = \bar{v}_{1}^{A}$,  &  $\displaystyle \delta v_{2}^{A} = 2 \bar{w}_{2}^{A}$, & $\displaystyle\delta \pi^{2}_{A} = \frac{\partial L}{\partial \bar{v}^{A}_{1}} - \pi_{A}^{2}$,\\
$\displaystyle \delta \pi_{A}^{3} = \frac{\partial L}{\partial \bar{x}^{A}} + \left(\bar{v}_{1}^{B} \mathbb{F}_{BA}^{a} + \bar{y}_{1}^{b}C_{bc}^{a}\mathbb{A}_{A}^{c} \right)\left( \pi_{a}^{2} + \bar{y}_{1}^{e}C_{ea}^{f}\pi_{f}^{1} \right)$, & $\displaystyle \delta y_{2}^{a} = 2 \bar{z}^{a}_{2}$, &
\\ $\displaystyle \delta \pi_{a}^{3} =  \bar{y}^{c}_{1}C_{ca}^{b}\pi_{b}^{2} + \bar{v}_{1}^{A}C_{ab}^{c}\mathbb{A}_{A}^{b}\pi_{c}^{2}$, &$\displaystyle \pi_{a}^{1} = \frac{1}{2} \frac{\partial L}{\partial \bar{z}^{a}_{2}}$, &\\
$\displaystyle \delta \pi_{a}^{2} = \frac{\partial L }{\partial \bar{y}_{1}^{a}}+ \bar{v}_{1}^{A}C_{ab}^{c}\mathbb{A}_{A}^{b}\pi_{c}^{1} - \pi_{a}^{2} $,&$\displaystyle \pi_{A}^{1} = \frac{1}{2}\frac{\partial L}{\partial \bar{w}^{A}_{2}}$ .&
\end{tabular}

\medskip
\noindent In the above  $\displaystyle \mathbb{A}_{A}^{a}$ are the components of the connection and $\displaystyle \mathbb{F}_{AB}^{a} = \frac{\partial \mathbb{A}_{B}^{a}}{\partial \bar{x}^{A}} -\frac{\partial \mathbb{A}_{A}^{a}}{\partial \bar{x}^{B}} + \mathbb{A}_{A}^{b}\mathbb{A}_{B}^{c}C_{cb}^{a}$ are  the components of the associated curvature. After a direct and  straightforward  calculation, the associated Euler--Lagrange equations are
\begin{eqnarray}
\nonumber \frac{\partial L}{\partial \bar{x}^{A}} - \frac{d}{dt}\left(\frac{\partial L}{\partial \bar{v}_{1}^{A}} \right) + \frac{1}{2}\frac{d^{2}}{dt^{2}}\left( \frac{\partial L}{\partial \bar{w}_{2}^{A}}\right) &=&  \left(\bar{v}_{1}^{B} \mathbb{F}_{BA}^{a} + \bar{y}_{1}^{b}C_{bc}^{a}\mathbb{A}_{A}^{c} \right) \frac{\partial L}{\partial \bar{y}_{1}^{a}}\\
 \nonumber &{-}& \left(\bar{v}_{1}^{B} \mathbb{F}_{BA}^{a} + \bar{y}_{1}^{b}C_{bc}^{a}\mathbb{A}_{A}^{c} \right)\left(\delta_{a}^{d}\frac{D}{Dt} + \bar{y}^{e}_{1}C_{ea}^{d} \right)\frac{1}{2}\frac{\partial L}{\partial \bar{z}_{2}^{d}}\,,\\
 \nonumber\left( \delta_{a}^{c}\frac{D}{Dt} - \bar{y}^{b}_{1}C_{ba}^{c} \right)\left( \frac{\partial L}{\partial \bar{y}^{c}_{1}} - \frac{1}{2}\frac{D}{Dt}\left(\frac{\partial L}{\partial \bar{z}_{2}^{c}}\right)\right) &=& 0\,,
\end{eqnarray}
\noindent where we have defined the covariant derivative $\displaystyle \frac{D}{Dt} \psi_{a} := \frac{d}{dt}\psi_{a} -  \bar{v}^{A}_{1}C_{ab}^{c}\mathbb{A}_{A}^{b} \psi_{c}$ for the appropriate objects.
\end{example}
The above pair of equations are, up to conventions,  the  \emph{second order Lagrange--Poincar\'{e}  equations} as defined in \cite{Gay-Balmaz:2011} using variational methods. It is clear that if we insist that the Lagrangian is independent of the weight two coordinates then we recover the classical Lagrange--Poincar\'{e} equations.   If the connection is trivial then we are back to the second order Hamel equations of the previous example.\medskip

\noindent If the Lie algebra $\mathfrak{g}$ is abelian then the second order Lagrange--Poincar\'{e} equations nicely simplify to
\begin{eqnarray}
\nonumber \frac{\partial L}{\partial \bar{x}^{A}} - \frac{d}{dt}\left(\frac{\partial L}{\partial \bar{v}_{1}^{A}} \right) + \frac{1}{2}\frac{d^{2}}{dt^{2}}\left( \frac{\partial L}{\partial \bar{w}_{2}^{A}}\right) &=&  \bar{v}_{1}^{B} \mathbb{F}_{BA}^{a}\left( \frac{\partial L}{\partial \bar{y}_{1}^{a}}
- \frac{d}{dt}\left(\frac{\partial L}{\partial \bar{z}_{2}^{a}}\right) \right),\\
\nonumber\frac{d}{dt}\left( \frac{\partial L}{\partial \bar{y}^{b}_{1}} - \frac{1}{2}\frac{d}{dt}\left(\frac{\partial L}{\partial \bar{z}_{2}^{b}}\right)\right) &=& 0.
\end{eqnarray}
\noindent The above equations describe a generalisation of the  equations describing the (non-relativistic) Lorentz force. Further notice that we have, in accordance to earlier observations that $\pi_{b}^{2} =\frac{\partial L}{\partial \bar{y}^{b}_{1}} - \frac{1}{2}\frac{d}{dt}\left(\frac{\partial L}{\partial \bar{z}_{2}^{b}}\right)$ is a constant of motion.


\vskip.3cm
\noindent Andrew James Bruce\\
\emph{Institute of Mathematics, Polish Academy of Sciences,}\\ {\small \'Sniadeckich 8,  00-656 Warszawa, Poland}\\ {\tt andrewjamesbruce@googlemail.com}\\

\noindent Katarzyna Grabowska\\
\emph{Faculty of Physics,
                University of Warsaw} \\
               {\small Pasteura 5, 02-093 Warszawa, Poland} \\
                 {\tt konieczn@fuw.edu.pl} \\

 \noindent Janusz Grabowski\\\emph{Institute of Mathematics, Polish Academy of Sciences}\\{\small \'Sniadeckich 8, 00-656 Warszawa,
Poland}\\{\tt jagrab@impan.pl}

\end{document}